\documentclass[conference]{IEEEtran}

\usepackage{algorithm}
\usepackage{algorithmic}
\usepackage{amsmath,amssymb,amsthm,bm}
\usepackage{mathtools}
\usepackage{graphicx}
\usepackage{booktabs}
\usepackage{multirow}
\usepackage{hyperref}
\usepackage{enumitem}
\usepackage{cite}
\usepackage{float}
\usepackage{braket}
\usepackage{caption}
\usepackage[framemethod=TikZ]{mdframed}
\usepackage{xcolor}
\usepackage{orcidlink}

\hypersetup{
  breaklinks=true,
  colorlinks=true,
  linkcolor=blue,
  urlcolor=blue,
  citecolor=blue
}

\graphicspath{{./}}

\newcommand{\one}{\mathbf{1}}
\newcommand{\zall}{\vec{z}^{(\one)}}
\newcommand{\zstar}{\vec{z}^{\star}}
\newcommand{\ztrunc}{\vec{z}^{\star\star}}
\newcommand{\bitflip}[1]{\vec{z}^{\,\oplus #1}}
\newcommand{\kmax}{k_{\max}}
\newcommand{\kfull}{k_{\mathrm{full}}}
\newcommand{\Nomit}{N_{>}}

\theoremstyle{plain}
\newtheorem{theorem}{Theorem}[section]
\newtheorem{corollary}{Corollary}[theorem]
\newtheorem{proposition}{Proposition}
\newtheorem{lemma}[proposition]{Lemma}

\theoremstyle{remark}

\theoremstyle{definition}

\newmdtheoremenv[
    linewidth=0.5pt,
    linecolor=black,
    backgroundcolor=black!3,
    innertopmargin=6pt,
    innerbottommargin=6pt,
    innerleftmargin=8pt,
    innerrightmargin=8pt,
    roundcorner=2pt,
    skipabove=6pt,
    skipbelow=6pt
]{definition}[proposition]{Definition}

\DeclareMathOperator{\E}{\mathbb{E}}
\DeclareMathOperator{\Var}{Var}

\def\BibTeX{{\rm B\kern-.05em{\sc i\kern-.025em b}\kern-.08em
    T\kern-.1667em\lower.7ex\hbox{E}\kern-.125emX}}

\title{Truncated-Binary Encoding: \\ Spectral Degree Reduction of Combinatorial Optimization Problems for Quantum Hardware}

\author{
\IEEEauthorblockN{Tristan Zaborniak\textsuperscript{1,2} \orcidlink{0000-0002-4301-0861}}
\IEEEauthorblockA{\textsuperscript{1}\textit{Department of Computer Science, University of Victoria}, 3800 Finnerty Rd., Victoria, BC V8P 5C2, Canada\\
\textsuperscript{2}\textit{Center for Computational Biology, Flatiron Institute}, 162 Fifth Avenue, New York, NY 10010, USA}
}

\begin{document}

\maketitle
\thispagestyle{plain}
\pagestyle{plain}

\begin{abstract}

    Exact-binary encoding compiles a discrete cost function network (CFN) into a higher-order unconstrained binary optimization (HUBO) problem whose maximum monomial degree grows with the cardinalities of the underlying CFN variables. Given that quantum optimization hardware generally favours quadratic unconstrained binary optimization or low-degree HUBO Hamiltonians, high-cardinality CFNs therefore incur substantial overhead in the form of circuit depth, or ancilla qubits when degree-reduction techniques are employed. To ameliorate these issues, we propose \textit{truncated-binary encoding} (TBE): a modification of exact-binary encoding in which Ising-basis monomials exceeding a chosen cutoff $\kmax$ are dropped from the encoded cost. We establish a tight $L^\infty$ bound on the truncation error in terms of the omitted couplings, derive sufficient conditions on the energy gap and on the single bit-flip basin barrier under which TBE preserves the global minimum and its local-minimum structure, and characterize a noise floor condition on the spectral profile under which the truncation residual acts as a perturbative correction to the underlying landscape. We then express the encoded coefficients directly as Walsh transforms of the underlying CFN cost tables, and prove a bound under which smoothness of each cost table implies rapid decay of its high-degree Walsh mass. Together these results yield a principled \textit{a priori} criterion for selecting $\kmax$ and for judging whether a given CFN admits a small-$\kmax$ TBE.

\end{abstract}

\smallskip

\begin{IEEEkeywords}
combinatorial optimization, cost function networks, HUBO, QUBO, Ising model, exact binary encoding, Walsh--Hadamard transform, Boolean Fourier transform, quantum optimization, spectral decomposition
\end{IEEEkeywords}

\section{Introduction}
\label{sec:intro}

Quantum optimization hardware is most naturally suited to quadratic or low-degree polynomial Hamiltonians. In annealing hardware, native interactions are essentially $2$-local \cite{Johnson2011, Albash2018}, so any interaction of higher order must be reduced to pairwise form. In gate-based optimization---such as the quantum approximate optimization algorithm (QAOA) and its alternating-operator generalizations \cite{Farhi2014, Hadfield2019}---native gates are likewise low-degree, and $k$-body Pauli interactions translate into per-layer circuit depths that scale linearly in $k$. Combinatorial optimization problems over discrete variables must therefore be encoded and reduced to quantum hardware-compatible forms before they can be solved. This is typically accomplished by introducing ancilla qubits and quadratization gadgets \cite{Boros2002, Dattani2019, Anthony2017}. Either route incurs nontrivial overhead, often dominating the resource requirements of the original problem.

Many practical combinatorial optimization problems are most cleanly expressed as cost function networks (CFNs), in which $N$ discrete variables $\vec{d} = (d_1,\ldots,d_N)$ with cardinalities $|d_i|$ interact through cost tables \cite{Schiex1995, Cooper2010}. Encoding such problems for quantum hardware requires a choice of binary representation \cite{Lucas2014}. The most popular options are one-hot encoding (which uses $\sum_i |d_i|$ binary variables and produces a QUBO) \cite{Lucas2014}, domain-wall encoding (which uses $\sum_i (|d_i|-1)$ variables and likewise produces a QUBO) \cite{Chancellor2019}, and exact binary encoding (which uses only $\sum_i \lceil \log_2 |d_i| \rceil$ variables but produces a HUBO of degree up to $\max_{(i,j)}\bigl(\lceil \log_2 |d_i|\rceil + \lceil \log_2 |d_j|\rceil\bigr)$) \cite{Berwald2024}. Exact binary encoding is therefore the most qubit-efficient option prior to quadratization or degree-reduction, but generates the most hardware-unfriendly Hamiltonian.

A complementary compilation axis---approximation quality---has been opened more recently by approximate-binary encoding \cite{Zaborniak2025}, which retains the logarithmic qubit count of exact binary encoding, but fits the HUBO couplings at a chosen target degree by solving an overdetermined least-squares system, so that the encoded cost approximates the underlying CFN rather than reproducing it exactly. We propose a different route along the same axis. The \textit{truncated-binary encoding} (TBE) at degree $\kmax$ is the polynomial obtained from exact binary encoding by discarding all monomials of degree exceeding $\kmax$  in the Ising basis. The resulting HUBO is hardware-targetable: choosing $\kmax = 2$ recovers a QUBO directly without ancillas, and choosing $\kmax = 3$ or $4$ matches the native locality of certain QAOA implementations or LHZ-style architectures \cite{Lechner2015}. The trade-off is approximation error, and the central question of this paper is when this error is small enough to preserve the global optimum and its basin.

We adopt the Ising basis $\{-1,+1\}^n$ throughout. Although exact binary encoding is most often presented in the $\{0,1\}^n$ basis \cite{Berwald2024}, in the Ising basis the monomial $\prod_{i\in S} z_i$ is identically the Walsh--Hadamard function $\chi_S$ \cite{ODonnell2014, deWolf2008}. As a result, the polynomial-degree expansion of an Ising HUBO is precisely its Walsh expansion, and the HUBO degree of every term coincides with its Walsh spectral degree: HUBO-degree truncation is then the orthogonal projection onto the Walsh subspace of degree at most $\kmax$, removing precisely the short-correlation-length, high-influence modes responsible for landscape ruggedness while preserving long-correlation-length basin structure. The same identification fails in the $\{0,1\}^n$ basis, where every $\{0,1\}$-form monomial of degree $k$ contributes Walsh content at all degrees $0, 1, \ldots, k$.

We establish four results. First, the truncation error in the $L^\infty$ sense is tightly bounded by the sum of the magnitudes of the omitted couplings (Theorem \ref{thm:linfty}); this bound is computable in time linear in the number of HUBO monomials and provides a direct, certificate-style guarantee. Second, when the energy gap of the original CFN exceeds twice this bound, the global minimum of the truncated encoding coincides with that of the full encoding (Theorem \ref{thm:optimum_preservation}), and an analogous statement holds for the local-minimum status of $\zstar$ under single bit-flip moves (Theorem \ref{thm:basin_preservation}). Third, the spectral profile $\{P_k\}$ of the Ising couplings provides a quantitative diagnostic for the appropriateness of a given cutoff (Section \ref{subsec:gaussian}). Fourth, this aggregate spectral profile decomposes additively over the cost tables of the underlying CFN, and CFN-level structural features allow \textit{a priori} provability of front-loaded spectra under appropriate bitstring assignments.

The remainder of this paper is organized as follows. Section \ref{sec:preliminaries} introduces exact binary encoding directly in the Ising basis, reviews the Walsh--Hadamard analysis required for the spectral perspective, and establishes a spectral leakage formula that justifies the Ising convention. Section \ref{sec:tbe} formally defines TBE and establishes the $L^\infty$ error bound. Section \ref{sec:preservation} derives sufficient conditions on the energy gap for global-minimum and basin preservation. Section \ref{sec:spectral} further develops the spectral perspective, deriving a noise floor condition. Section \ref{sec:cfn-structure} traces the spectrum back to the underlying cost tables, and develops the discrete-calculus machinery needed for smoothness criteria favourable to front-loaded spectra and TBE. Finally, Section \ref{sec:algorithm} presents the practical compilation algorithm, and we close with a discussion and conclusion in Section \ref{sec:discussion}.

\section{Preliminaries: Exact-Binary Ising Encoding and Walsh--Hadamard Spectral Decomposition}
\label{sec:preliminaries}

\subsection{Cost Function Networks}
\label{subsec:cfn}

A pairwise-decomposable CFN over $N$ discrete variables with cardinalities $|d_i|$ specifies a cost:

\begin{equation}\label{eq:cfn}
    f_{\mathrm{CFN}}(\vec{d}) = \sum_{i=1}^{N} \alpha_i(d_i) + \sum_{1\leq i < j \leq N} \beta_{i,j}(d_i, d_j)
\end{equation}

\noindent where $\alpha_i\colon \{1,\ldots,|d_i|\} \to \mathbb{R}$ and $\beta_{i,j}\colon \{1,\ldots,|d_i|\} \times \{1,\ldots,|d_j|\} \to \mathbb{R}$ are tabulated. Throughout, we refer to $\alpha_i$ as the \textit{unary cost table} for variable $i$ and to $\beta_{i,j}$ as the \textit{pairwise cost table} for the unordered variable pair $\{i,j\}$. The CFN induces an interaction graph having the vertex set $\{1,\ldots,N\}$ and edges $\{i,j\}$ whenever $\beta_{i,j} \not\equiv 0$.

The combinatorial optimization problem associated with a CFN is to find $\vec{d}^\star \in \arg\min_{\vec{d}} f_{\mathrm{CFN}}(\vec{d})$. Many problems of practical interest fit this template: weighted constraint satisfaction \cite{Schiex1995}, weighted Max-$k$-SAT, graph coloring \cite{Lucas2014}, and rotamer-based protein design \cite{Allouche2014, Babbush2014} are examples. While we work with pairwise-decomposable CFNs throughout, our method generalizes to CFNs of higher order.

\subsection{Exact Binary Encoding in the Ising Basis}
\label{subsec:ising-encoding}

Following Ref. \cite{Berwald2024}, exact binary encoding assigns to each index-$i$ variable choice $c_i \in \{1,\ldots,|d_i|\}$ a distinct bitstring $\vec{r}^{\,(c_i)} \in \{0,1\}^{D_i}$ of length:

\begin{equation}\label{eq:Di}
    D_i = \lceil \log_2 |d_i| \rceil
\end{equation}

\noindent When $|d_i|$ is not a power of two, exactly $2^{D_i} - |d_i|$ of the available bitstrings are unused; we treat their cost contributions as either undefined (to be handled by penalty) or set to a fallback choice value, deferring discussion of this point to Section \ref{sec:algorithm}. The bitstring assignment $c_i \mapsto \vec{r}^{\,(c_i)}$ is a free parameter of the encoding; we will see in Section \ref{sec:cfn-structure} that its choice substantially affects the spectral profile of the encoded HUBO and that it must be chosen jointly with $\kmax$.

We work in Ising variables throughout, with one Ising spin per bit. For register $i$, we write $\vec{z}_i = (z_{i,0},\ldots,z_{i,D_i-1}) \in \{-1,+1\}^{D_i}$ and adopt the bit-to-spin convention:

\begin{equation}\label{eq:bit-to-spin}
    z_{i,q} = 1 - 2\,b_{i,q} \quad \text{for } q = 0, \ldots, D_i - 1
\end{equation}

\noindent where $b_{i,q}\in\{0,1\}$, so that bit $0$ corresponds to spin $+1$ and bit $1$ to spin $-1$. The complete spin vector is $\vec{z} = (\vec{z}_1, \ldots, \vec{z}_N) \in \{-1,+1\}^n$, and the total qubit count is:

\begin{equation}\label{eq:n-total}
    n = \sum_{i=1}^{N} D_i
\end{equation}

\noindent Further, to each choice $c_i$ corresponds a sign vector:

\begin{equation}\label{eq:sign-vector}
    \vec{s}^{\,(c_i)} = \bigl(1 - 2 r_0^{(c_i)},\, \ldots,\, 1 - 2 r_{D_i-1}^{(c_i)}\bigr) \in \{-1,+1\}^{D_i}
\end{equation}

\noindent which records the spin pattern that encodes choice $c_i$.

The indicator polynomial that selects choice $c_i$ from register $i$ is the polynomial in $\vec{z}_i$ that evaluates to $1$ when $\vec{z}_i = \vec{s}^{\,(c_i)}$ and to $0$ on every other Ising configuration. Constructing this product literal-by-literal:

\begin{equation}\label{eq:indicator}
    x_{i,c_i}(\vec{z}_i) = \frac{1}{2^{D_i}} \prod_{q=0}^{D_i - 1} \bigl(1 + s_q^{(c_i)}\,z_{i,q}\bigr)
\end{equation}

\noindent each factor evaluates to $2$ when $z_{i,q} = s_q^{(c_i)}$ (the spin matches) and to $0$ otherwise; the product therefore equals $2^{D_i}$ when all bits match and $0$ when any single bit disagrees, giving the indicator after the prefactor $2^{-D_i}$. Expanding yields:

\begin{equation}\label{eq:indicator_walsh}
    x_{i,c_i}(\vec{z}_i) = \frac{1}{2^{D_i}} \sum_{T \subseteq [D_i]} \sigma_T^{(c_i)}\,\prod_{q\in T} z_{i,q}
\end{equation}

\noindent where:

\begin{equation}\label{eq:sigma-def}
    \sigma_T^{(c_i)} =\prod_{q \in T} s_q^{(c_i)} \in \{-1,+1\}
\end{equation}

\noindent is the product of the spin-determined signs over the index set $T$, and $[D_i] = \{0, 1, \ldots, D_i - 1\}$. Equation \eqref{eq:indicator_walsh} is already the Walsh expansion of $x_{i,c_i}$ on the register-$i$ hypercube; this is the key consequence of working in the Ising basis from the start, and we exploit it throughout.

Because exactly one $x_{i,c_i}(\vec{z}_i)$ takes the value $1$ for any $\vec{z}_i$ that encodes a valid choice (the others being $0$), the unary cost contribution at register $i$ can be written as:

\begin{equation}\label{eq:unary-encoded}
    \alpha_i\bigl(d_i(\vec{z}_i)\bigr) = \sum_{c_i = 1}^{|d_i|} \alpha_i(c_i)\,x_{i,c_i}(\vec{z}_i)
\end{equation}

\noindent and the pairwise contribution analogously as:

\begin{equation}\label{eq:binary-encoded}
    \beta_{i,j}(d_i, d_j) = \sum_{c_i, c_j} \beta_{i,j}(c_i, c_j)\,x_{i,c_i}(\vec{z}_i)\,x_{j,c_j}(\vec{z}_j)
\end{equation}

\noindent Substituting Equations~\eqref{eq:unary-encoded}--\eqref{eq:binary-encoded} into Equation \eqref{eq:cfn} and collecting like monomials yields the Ising HUBO form:

\begin{equation}\label{eq:ising_hubo}
    f(\vec{z}) = \sum_{S \subseteq [n]} c_S \prod_{i \in S} z_i
\end{equation}

\noindent of maximum degree:

\begin{equation}\label{eq:max_degree}
    \kfull = \max_{(i,j)} (D_i + D_j)
\end{equation}

\noindent attained by products of two indicator polynomials, each contributing $D_i$ literal variables. The couplings $\{c_S\}$ are explicit linear combinations of the CFN cost-table entries $\{\alpha_i(c_i),\,\beta_{i,j}(c_i, c_j)\}$ weighted by the signs $\sigma_T^{(c_i)}$ of Equation \eqref{eq:sigma-def} and powers of $1/2$ from Equation \eqref{eq:indicator_walsh}; their precise form is derived in Section \ref{subsec:cfn-walsh}.

Equation \eqref{eq:max_degree} is the obstruction we ultimately wish to address. For a CFN with two interacting variables of cardinality $32$ (so $D_i = D_j = 5$), $\kfull = 10$; for cardinality $128$, $\kfull = 14$. Both are well outside the native locality of contemporary quantum optimization hardware.

\subsection{Walsh--Hadamard Analysis on the Ising Hypercube}
\label{subsec:walsh}

The Walsh--Hadamard basis provides a discrete analogue of Fourier analysis on the hypercube and is the natural setting for spectral decomposition arguments in this setting \cite{Hammer1968, ODonnell2014, deWolf2008, Weinberger1990}. For each $S \subseteq [n]$, the Walsh function $\chi_S\colon \{-1,+1\}^n \to \{-1,+1\}$ is defined by:

\begin{equation}\label{eq:walsh-def}
    \chi_S(\vec{z}) = \prod_{i \in S} z_i
\end{equation}

\noindent and has \textit{degree} $|S|$. The family $\{\chi_S\}_{S \subseteq [n]}$ is a complete orthonormal basis under the uniform measure on $\{-1,+1\}^n$:

\begin{equation}\label{eq:walsh-ortho}
    \E_{\vec{z}}\bigl[\chi_S(\vec{z})\,\chi_T(\vec{z})\bigr] = \delta_{S,T}
\end{equation}

\noindent where the expectation is taken with respect to the uniform distribution. Any $f\colon \{-1,+1\}^n \to \mathbb{R}$ admits a unique expansion:

\begin{equation}\label{eq:walsh-exp}
    f(\vec{z}) = \sum_{S \subseteq [n]} \hat{f}(S)\,\chi_S(\vec{z})
\end{equation}

\noindent with Walsh coefficients $\hat{f}(S) = \E_{\vec{z}}[f(\vec{z})\,\chi_S(\vec{z})]$. The variance of $f$ decomposes by degree:

\begin{equation}\label{eq:variance-decomp}
    \sigma_f^2 = \sum_{k \geq 1} P_k \quad \text{with} \quad P_k = \sum_{|S| = k} \hat{f}(S)^2
\end{equation}

\noindent the \textit{degree-$k$ spectral power}.

Note that the Ising HUBO of Equation \eqref{eq:ising_hubo} is identically its Walsh expansion. Comparing Equations~\eqref{eq:ising_hubo} and~\eqref{eq:walsh-exp}:

\begin{equation}\label{eq:walsh_identification}
    f(\vec{z}) = \sum_{S\subseteq[n]} c_S \prod_{i\in S} z_i = \sum_{S\subseteq[n]} \hat{f}(S)\,\chi_S(\vec{z})
\end{equation}

\noindent so $\hat{f}(S) = c_S$ exactly, with no further calculation. The polynomial degree of any monomial coincides with its Walsh-spectral degree, and the spectral profile $\{P_k\}$ is computable in $O(m)$ time from the $m$ nonzero couplings:

\begin{equation}\label{eq:spectral_profile}
    P_k = \sum_{|S|=k} c_S^2
\end{equation}

\noindent without further basis transformation. Every operation we perform in the polynomial basis has an immediate spectral interpretation, and, conversely, every spectral construction has an immediate polynomial-basis implementation. We exploit both directions in what follows.

\subsection{Spectral Leakage and the Inequivalence of the $\{0,1\}$ Basis}
\label{subsec:leakage}

The substitution $b_i = (1 - z_i)/2$ converts an Ising HUBO into an equivalent $\{0,1\}$-form HUBO and vice versa: the two encodings represent the same cost function on the same configuration space and yield the same global optimum. They are \textit{not} equivalent for purposes of truncation, however, and the inequivalence is enough to warrant a short proof.

\begin{theorem}[Spectral leakage in the $\{0,1\}$ basis]\label{prop:leakage}

    Let $f\colon \{-1,+1\}^n \to \mathbb{R}$ admit a $\{0,1\}$-form polynomial HUBO expansion:

    \begin{equation}\label{eq:01-expansion}
        f(\vec{b}) = \sum_{S \subseteq [n]} c^\prime_S \prod_{i \in S} b_i \quad \text{with} \quad b_i = \tfrac{1}{2}(1 - z_i)
    \end{equation}

    \noindent Then the Walsh coefficient of $f$ at $T$ is:

    \begin{equation}\label{eq:leakage-formula}
        \hat{f}(T) = (-1)^{|T|} \sum_{S \supseteq T} \frac{c^\prime_S}{2^{|S|}}
    \end{equation}

    \noindent In particular, every $\{0,1\}$-monomial of degree $k$ contributes Walsh mass at every degree $0, 1, \ldots, k$.

\end{theorem}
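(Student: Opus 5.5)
The plan is a direct substitution followed by an appeal to uniqueness of the Walsh expansion (Equation \eqref{eq:walsh-exp}). The starting point is the single-monomial identity
\begin{equation*}
    \prod_{i \in S} b_i = \prod_{i\in S} \tfrac{1}{2}(1 - z_i) = \frac{1}{2^{|S|}} \sum_{T \subseteq S} (-1)^{|T|} \prod_{i \in T} z_i = \frac{1}{2^{|S|}} \sum_{T \subseteq S} (-1)^{|T|}\chi_T(\vec{z}).
\end{equation*}
It follows by distributing the product over the binomial factors, choosing either $1$ or $-z_i$ from each factor, exactly as in the expansion of the indicator polynomial in Equation \eqref{eq:indicator_walsh}.

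Substituting this into Equation \eqref{eq:01-expansion} and exchanging the order of the finite double sum over pairs $T \subseteq S \subseteq [n]$ gives
\begin{equation*}
    f(\vec{z}) = \sum_{T\subseteq[n]} \Bigl( (-1)^{|T|}\sum_{S \supseteq T} \frac{c'_S}{2^{|S|}} \Bigr) \chi_T(\vec{z}).
\end{equation*}
The Walsh functions form an orthonormal basis by Equation \eqref{eq:walsh-ortho}, so the expansion is unique. Equivalently, one can compute $\hat f(T) = \E[f\chi_T]$ and use orthonormality to kill every term except the $\chi_T$ term. Either way the bracketed coefficient is $\hat{f}(T)$, which is Equation \eqref{eq:leakage-formula}.

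For the ``in particular'' clause, I isolate a single monomial $c'_S \prod_{i\in S} b_i$ with $|S|=k$ and $c'_S\neq 0$. By the identity above it contributes $(-1)^{|T|}c'_S/2^{k}\neq 0$ to the Walsh coefficient of every $T\subseteq S$. Since $S$ has subsets of every size $0,1,\ldots,k$, it puts nonzero Walsh mass at every degree up to $k$; in fact its contribution to the degree-$j$ power is $\binom{k}{j} c'^2_S/4^k$.

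There is no real obstacle. The only step needing care is the sign and normalization bookkeeping in the factor $(-1)^{|T|}/2^{|S|}$ and the reindexing of the double sum. One should also note that the claim is about the contribution of each monomial individually: in the aggregate $\hat f(T)$, contributions from different $S$ can cancel, so the statement should be read per monomial, as phrased.
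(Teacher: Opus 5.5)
Your proposal is correct and matches the paper's proof: expand $\prod_{i\in S} b_i$ into signed Walsh functions, substitute, exchange the finite double sum, and read off the coefficient of $\chi_T$. You also handle the ``in particular'' clause explicitly, giving the per-monomial contribution $(-1)^{|T|}c'_S/2^{k}$ to every $T\subseteq S$ and correctly noting that contributions from different monomials can cancel in the aggregate; the paper leaves both points implicit.
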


\begin{proof}

    For any $S \subseteq [n]$:

    \begin{equation}\label{eq:b-product}
        \prod_{i \in S} b_i = \prod_{i \in S} \frac{1 - z_i}{2} = \frac{1}{2^{|S|}} \sum_{T \subseteq S} (-1)^{|T|}\,\chi_T(\vec{z})
    \end{equation}

    \noindent obtained by expanding the product $\prod_{i \in S}(1 - z_i)$. Substituting Equation \eqref{eq:b-product} into Equation \eqref{eq:01-expansion} and exchanging the order of summation, we get:

    \begin{equation}\label{eq:double-sum}
        f(\vec{z}) = \sum_{T \subseteq [n]} (-1)^{|T|}\,\chi_T(\vec{z}) \sum_{S \supseteq T} \frac{c^\prime_S}{2^{|S|}}
    \end{equation}

    \noindent Reading off the coefficient of $\chi_T$ gives Equation \eqref{eq:leakage-formula}.

\end{proof}

The implication for truncation is immediate. Suppose we truncate $f$ in the $\{0,1\}$ basis by setting $c^\prime_S = 0$ for $|S| > \kmax$. By Equation \eqref{eq:leakage-formula}, the Walsh coefficient $\hat{f}(T)$ for any $T$ with $|T| \leq \kmax$ changes by:

\begin{equation}\label{eq:leakage-error}
    (-1)^{|T|} \sum_{\substack{S \supseteq T \\ |S| > \kmax}} \frac{c^\prime_S}{2^{|S|}}
\end{equation}

\noindent which is generally nonzero. Truncation in the $\{0,1\}$ basis therefore perturbs every Walsh band below the cutoff. By contrast, in the Ising basis $\hat{f}(T) = c_T$ exactly (Equation \eqref{eq:walsh_identification}), so dropping $c_S$ with $|S| > \kmax$ leaves $\hat{f}(T)$ unchanged for every $|T| \leq \kmax$.

\section{Truncated-Binary Encoding}
\label{sec:tbe}

\subsection{Definition of the Truncated Binary Encoding}
\label{subsec:tbe-def}

\begin{definition}[Truncated-binary encoding]\label{def:tbe}

    Let $f$ be the Ising HUBO of Equation \eqref{eq:ising_hubo} obtained from a CFN $f_{\mathrm{CFN}}$ by exact binary encoding. The \textit{truncated-binary encoding} (TBE) of $f_{\mathrm{CFN}}$ at cutoff $\kmax \in \{1,\ldots,\kfull\}$ is:

    \begin{equation}\label{eq:tbe}
        f_{\leq \kmax}(\vec{z}) = \sum_{\substack{S \subseteq [n] \\ |S| \leq \kmax}} c_S \prod_{i \in S} z_i
    \end{equation}

    \noindent obtained from $f$ by deleting all monomials of degree exceeding $\kmax$. Equivalently, Equation \eqref{eq:tbe} may be obtained by orthogonal projection onto the Walsh subspace:

    \begin{equation}\label{eq:Vk}
        V_{\leq \kmax} = \mathrm{span}\bigl\{\chi_S : |S| \leq \kmax\bigr\}
    \end{equation}

\end{definition}

The truncated polynomial $f_{\leq \kmax}$ is, in general, no longer the exact binary encoding of any CFN. The structural identity $\sum_{c_i = 1}^{|d_i|} x_{i,c_i}(\vec{z}_i) \in \{0,1\}$ depends on cancellations between high-degree monomials in the literal-product expansion of Equation \eqref{eq:indicator}; truncation removes those high-degree monomials and breaks the identity. The TBE is therefore a low-pass-filter \textit{approximation} of the Ising cost function $f$, evaluated on the full hypercube $\{-1,+1\}^n$. We now find the $L^\infty$ bound of the resulting approximation error.


\subsection{The $L^\infty$ Truncation Error Bound}
\label{subsec:linfty}

The supremum norm ($L^\infty$-norm) on the hypercube is:

\begin{equation}\label{eq:sup-norm}
    \|f\|_\infty = \max_{\vec{z} \in \{-1,+1\}^n} |f(\vec{z})|
\end{equation}

\noindent and bounds the worst-case pointwise error between two functions on the hypercube. We bound $\|f - f_{\leq \kmax}\|_\infty$ in terms of the omitted couplings.

\begin{theorem}[$L^\infty$ truncation error bound]\label{thm:linfty}

    Let $f$ and $f_{\leq \kmax}$ be as in Definition~\ref{def:tbe}, and define:

    \begin{equation}\label{eq:epsilon_def}
        \varepsilon(\kmax) \coloneqq \sum_{\substack{S \subseteq [n] \\ |S| > \kmax}} |c_S|
    \end{equation}

    \noindent Then:

    \begin{equation}\label{eq:linfty_bound}
        \|f - f_{\leq \kmax}\|_\infty \leq \varepsilon(\kmax)
    \end{equation}

    \noindent and the bound is attained whenever there exists $\vec z \in \{-1,+1\}^n$ with $\chi_S(\vec z) = \operatorname{sign}(c_S)$ for every $S$ with $|S| > \kmax$. In particular, the all-ones configuration $\zall = (+1,\ldots,+1)$ saturates the bound whenever the omitted couplings $\{c_S : |S| > \kmax\}$ all share a common sign,
    since $\chi_S(\zall) = +1$ for every $S$.

\end{theorem}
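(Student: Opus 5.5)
The plan is to write the truncation residual explicitly and apply the triangle inequality, using that every Walsh function has unit modulus on the hypercube. By Definition~\ref{def:tbe} and Equation~\eqref{eq:ising_hubo}, the difference is exactly the sum of the deleted monomials:
\begin{equation*}
    f(\vec z) - f_{\leq \kmax}(\vec z) = \sum_{\substack{S \subseteq [n] \\ |S| > \kmax}} c_S\,\chi_S(\vec z).
\end{equation*}
No basis change is needed here, because of the identification in Equation~\eqref{eq:walsh_identification}.

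For the upper bound, fix an arbitrary $\vec z \in \{-1,+1\}^n$. Since $\chi_S(\vec z) \in \{-1,+1\}$, the triangle inequality gives
\begin{equation*}
    \bigl|f(\vec z) - f_{\leq \kmax}(\vec z)\bigr| \leq \sum_{|S|>\kmax} |c_S|\,|\chi_S(\vec z)| = \varepsilon(\kmax).
\end{equation*}
Taking the maximum over $\vec z$ yields Equation~\eqref{eq:linfty_bound}.

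For attainment, suppose $\vec z$ satisfies $\chi_S(\vec z) = \operatorname{sign}(c_S)$ for every $|S| > \kmax$. Then each term $c_S\chi_S(\vec z)$ equals $|c_S|$, so the residual at $\vec z$ equals $\varepsilon(\kmax)$ and the maximum is achieved.

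The one point that needs care is a coupling with $c_S = 0$, where $\operatorname{sign}(0)$ is ambiguous. I would read the condition as ranging only over nonzero couplings (or set $\operatorname{sign}(0)$ to be arbitrary), since those terms contribute nothing either way.

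For the all-ones corollary, note that $\chi_S(\zall) = 1$ for every $S$. If all the omitted nonzero couplings are positive, the residual at $\zall$ is $\sum |c_S| = \varepsilon(\kmax)$. If they are all negative, the residual is $-\varepsilon(\kmax)$, and its absolute value still attains the bound. So the condition "$\chi_S = \operatorname{sign}(c_S)$" should be read up to a global sign in the common-sign case, and I would state this explicitly.

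There is no real obstacle in this argument. The only subtleties are the zero-coupling convention and the global-sign remark in the negative case, and I would handle both explicitly.
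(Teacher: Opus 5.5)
Your proof is correct and matches the paper's argument: you write the residual as $\sum_{|S|>\kmax} c_S\chi_S(\vec z)$, apply the triangle inequality using $|\chi_S(\vec z)|=1$, and evaluate at $\zall$ for saturation. Your two caveats are sound refinements that the paper's proof glosses over. The first is how to read the sign-matching condition when some omitted $c_S=0$. The second is the all-negative case, where the residual at $\zall$ is $-\varepsilon(\kmax)$, so the ``in particular'' clause is not literally an instance of the sign-matching condition.
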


\begin{proof}

    The pointwise difference is:

    \begin{equation}\label{eq:proof-diff}
        f(\vec{z}) - f_{\leq \kmax}(\vec{z}) = \sum_{|S| > \kmax} c_S\,\chi_S(\vec{z})
    \end{equation}

    \noindent Each Walsh function takes values $\chi_S(\vec{z}) \in \{-1,+1\}$, so by the triangle inequality:

    \begin{equation}\label{eq:proof-tri}
        \Bigl|\sum_{|S| > \kmax} c_S\,\chi_S(\vec{z})\Bigr| \leq \sum_{|S| > \kmax} |c_S|\,|\chi_S(\vec{z})| = \sum_{|S| > \kmax} |c_S|
    \end{equation}

    \noindent which is Equation \eqref{eq:linfty_bound}. At $\zall$, every Walsh function evaluates to $\chi_S(\zall) = \prod_{i \in S}(+1) = 1$, so the difference equals $\sum_{|S|>\kmax} c_S$; when all $c_S$ share a common sign this magnitude is exactly $\varepsilon(\kmax)$.

\end{proof}

The bound of Equation \eqref{eq:linfty_bound} is computable in time linear in the number of HUBO monomials and is therefore available as a certificate at compile time, before any optimization is run. In practice, for randomly distributed coupling signs, the typical pointwise error is much smaller than $\varepsilon(\kmax)$ owing to cancellations among Walsh modes of opposite sign. Section \ref{sec:spectral} replaces the worst-case $L^\infty$ analysis with a typical-case $L^2$ analysis and derives a quantitatively tighter spectral diagnostic, but we first treat in further detail the $L^\infty$ case in Section \ref{sec:preservation}, deriving exact bounds on global and local minima preservation after cost function truncation.

\section{Optimum and Basin Preservation}
\label{sec:preservation}

The $L^\infty$ bound translates directly into preservation guarantees once it is compared against the energy gap of the original cost function. We treat the global-minimum guarantee first, then a strictly weaker but more easily satisfied local-minimum guarantee that controls basin structure under single bit-flips.

\subsection{Global Minimum Preservation}
\label{subsec:global}

\begin{theorem}[Global minimum preservation under truncation]\label{thm:optimum_preservation}

    Let $\zstar \in \arg\min_{\vec{z}} f(\vec{z})$ be a global minimizer of the full Ising HUBO and define the energy gap:

    \begin{equation}\label{eq:gap_def}
        \Delta E^\star = \min_{\vec{z} \notin \arg\min f} f(\vec{z}) - f(\zstar)
    \end{equation}

    \noindent If $\Delta E^\star > 2\varepsilon(\kmax)$, then every minimizer of $f_{\leq \kmax}$ is also a minimizer of $f$.

\end{theorem}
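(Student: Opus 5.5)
The plan is a standard sandwich argument built on Theorem~\ref{thm:linfty}. Write $\varepsilon = \varepsilon(\kmax)$ and $g = f_{\leq \kmax}$. By Equation~\eqref{eq:linfty_bound}, every $\vec z \in \{-1,+1\}^n$ satisfies
\begin{equation*}
    f(\vec z) - \varepsilon \leq g(\vec z) \leq f(\vec z) + \varepsilon .
\end{equation*}
This two-sided pointwise control is the only ingredient we need. It applies on the entire hypercube, including bitstrings that encode no valid CFN choice, so no separate treatment of unused codewords is required.

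We prove the contrapositive. Suppose $\vec w \notin \arg\min f$. By the definition of the gap in Equation~\eqref{eq:gap_def}, $f(\vec w) \geq f(\zstar) + \Delta E^\star$. Chaining this with the sandwich gives
\begin{equation*}
    g(\vec w) \geq f(\vec w) - \varepsilon \geq f(\zstar) + \Delta E^\star - \varepsilon > f(\zstar) + \varepsilon \geq g(\zstar),
\end{equation*}
where the strict inequality uses the hypothesis $\Delta E^\star > 2\varepsilon$. Hence $g(\vec w) > g(\zstar) \geq \min g$, so $\vec w$ is not a minimizer of $g$. Every minimizer of $g$ must therefore lie in $\arg\min f$.

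Two details need care. First, $\Delta E^\star$ is defined as a minimum over a set that may be empty. If $f$ is constant, every $\vec z$ minimizes $f$ and the claim holds trivially, so we dispose of this case at the outset. Second, $\zstar$ is only one representative of $\arg\min f$, but $f$ takes the same value on all of its minimizers, so the comparison does not depend on this choice. Note also that the statement only asserts the inclusion $\arg\min g \subseteq \arg\min f$, not equality. When $f$ has degenerate minima, truncation may break the tie among them, and we should not try to prove more than the inclusion.

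There is no genuine obstacle here. The one step that needs attention is getting the strict inequality right: the strictness in $\Delta E^\star > 2\varepsilon$ is exactly what rules out a non-minimizer of $f$ tying with $\zstar$ under $g$. This is also where the constant $2$ enters, with one $\varepsilon$ lost on each side of the sandwich.
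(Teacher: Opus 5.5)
Your proof is correct and is essentially the paper's argument: both rest only on the two-sided sandwich $f_{\leq\kmax} - \varepsilon \leq f \leq f_{\leq\kmax} + \varepsilon$ from Theorem~\ref{thm:linfty}. The paper runs the chain forward from an arbitrary minimizer $\ztrunc$ of $f_{\leq\kmax}$ to get $f(\ztrunc) \leq f(\zstar) + 2\varepsilon < f(\zstar) + \Delta E^\star$, whereas you run the same chain in contrapositive form starting from a non-minimizer of $f$. Your explicit handling of the constant-$f$ case, where the minimum defining $\Delta E^\star$ ranges over an empty set, is a small point the paper leaves implicit.
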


\begin{proof}

    Abbreviate $\varepsilon = \varepsilon(\kmax)$, and let $\ztrunc \in \arg\min f_{\leq \kmax}$ be any minimizer of the truncated encoding. We aim to show that $\ztrunc \in \arg\min f$ as well, so that every minimizer of $f_{\leq \kmax}$ is also a minimizer of the full encoding.

    Theorem \ref{thm:linfty} bounds the supremum norm of the truncation residual by $\varepsilon$, which gives us the following two-sided estimate:

    \begin{equation}\label{eq:two_sided}
        f_{\leq \kmax}(\vec{z}) - \varepsilon \leq f(\vec{z}) \leq f_{\leq \kmax}(\vec{z}) + \varepsilon
    \end{equation}

    \noindent valid for every $\vec{z} \in \{-1,+1\}^n$. Now, apply the upper bound of Equation \eqref{eq:two_sided} at $\vec{z} = \ztrunc$, invoke the optimality of $\ztrunc$ for $f_{\leq \kmax}$, and apply the lower bound of Equation \eqref{eq:two_sided} at $\vec{z} = \zstar$:

    \begin{align}\label{eq:bracket}
        f(\ztrunc)
        &\leq f_{\leq \kmax}(\ztrunc) + \varepsilon \nonumber \\
        &\leq f_{\leq \kmax}(\zstar) + \varepsilon \nonumber \\
        &\leq \bigl(f(\zstar) + \varepsilon\bigr) + \varepsilon \nonumber \\
        &= f(\zstar) + 2\varepsilon
    \end{align}

    \noindent The first inequality is the upper bound of Equation \eqref{eq:two_sided} evaluated at $\ztrunc$. The second follows because $\ztrunc \in \arg\min f_{\leq \kmax}$ implies $f_{\leq \kmax}(\ztrunc) \leq f_{\leq \kmax}(\vec{z})$ for every $\vec{z}$, including $\vec{z} = \zstar$. The third inequality is the lower bound of Equation \eqref{eq:two_sided} evaluated at $\zstar$ and rearranged as $f_{\leq \kmax}(\zstar) \leq f(\zstar) + \varepsilon$.

    Subtracting $f(\zstar)$ from both ends of Equation \eqref{eq:bracket} and applying the hypothesis $\Delta E^\star > 2\varepsilon$:

    \begin{equation}\label{eq:excess_bound}
        f(\ztrunc) - f(\zstar) \leq 2\varepsilon < \Delta E^\star
    \end{equation}

    \noindent It remains to deduce $\ztrunc \in \arg\min f$ from this excess bound. By Equation \eqref{eq:gap_def}, any non-minimizer $\vec{z} \notin \arg\min f$ satisfies $f(\vec{z}) \geq f(\zstar) + \Delta E^\star$; equivalently, by contrapositive, any $\vec{z}$ with $f(\vec{z}) - f(\zstar) < \Delta E^\star$ must lie in $\arg\min f$. Equation \eqref{eq:excess_bound} places $\ztrunc$ in the latter category, so $\ztrunc \in \arg\min f$. Since $\ztrunc$ was an arbitrary minimizer of $f_{\leq \kmax}$, the conclusion holds for every such minimizer.

\end{proof}

The condition $\Delta E^\star > 2\varepsilon(\kmax)$ is among the central design constraints for selecting $\kmax$. However, the energy gap $\Delta E^\star$ is a property of the original CFN and is, in the worst case, NP-hard to compute exactly; in practice it is typically estimated either by problem-specific theoretical bounds or empirically by sampling. Equation \eqref{eq:bracket} additionally yields a useful weaker statement that does not require knowledge of the gap.

\begin{corollary}[Approximate-minimum recovery]\label{cor:approx}

    For any $\kmax$, every minimizer $\ztrunc$ of $f_{\leq \kmax}$ satisfies:

    \begin{equation}\label{eq:approx_bound}
        f(\ztrunc) \leq f(\zstar) + 2\varepsilon(\kmax)
    \end{equation}

    \noindent and so achieves the global cost up to additive error $2\varepsilon(\kmax)$ even when the gap condition of Theorem \ref{thm:optimum_preservation} fails.

\end{corollary}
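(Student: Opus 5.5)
The corollary follows from the chain of inequalities in Equation \eqref{eq:bracket} from the proof of Theorem \ref{thm:optimum_preservation}, which never used the gap hypothesis. I will re-derive that chain without invoking $\Delta E^\star$. Fix any $\kmax$ and abbreviate $\varepsilon = \varepsilon(\kmax)$. Let $\ztrunc$ be an arbitrary minimizer of $f_{\leq \kmax}$, and let $\zstar$ be a global minimizer of $f$. Both exist because the hypercube $\{-1,+1\}^n$ is finite.

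First, I will invoke Theorem \ref{thm:linfty}, which gives $\|f - f_{\leq \kmax}\|_\infty \leq \varepsilon$. This yields the two-sided pointwise estimate of Equation \eqref{eq:two_sided} at every configuration. Second, I will apply its upper half at $\ztrunc$ to get $f(\ztrunc) \leq f_{\leq \kmax}(\ztrunc) + \varepsilon$. Third, I will use minimality of $\ztrunc$ for $f_{\leq \kmax}$, compared against the particular point $\zstar$, to replace $f_{\leq \kmax}(\ztrunc)$ by $f_{\leq \kmax}(\zstar)$. Fourth, I will apply the lower half of Equation \eqref{eq:two_sided} at $\zstar$, rearranged as $f_{\leq \kmax}(\zstar) \leq f(\zstar) + \varepsilon$. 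Chaining these three inequalities gives $f(\ztrunc) \leq f(\zstar) + 2\varepsilon$, which is Equation \eqref{eq:approx_bound}.

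There is no genuine obstacle here. The only point to state explicitly is that the conclusion holds for every $\kmax \in \{1,\ldots,\kfull\}$ and every choice of minimizer $\ztrunc$, because none of the steps uses the gap. This is also why the bound remains meaningful when the condition $\Delta E^\star > 2\varepsilon(\kmax)$ fails.
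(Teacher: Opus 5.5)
Your proposal is correct and is essentially the paper's own argument. The paper reads the corollary directly off the chain \eqref{eq:bracket} in the proof of Theorem~\ref{thm:optimum_preservation}, which uses only the two-sided estimate \eqref{eq:two_sided} and the minimality of $\ztrunc$; the gap hypothesis enters only afterwards, in \eqref{eq:excess_bound}.
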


Equation \eqref{eq:approx_bound} establishes TBE as a $2\varepsilon(\kmax)$-additive approximation scheme, with $\kmax$ tuneable along a smooth quality--cost frontier ranging from $\kmax = \kfull$ to $\kmax = 2$.

\subsection{Local Minimum Preservation}
\label{subsec:basin}

A decidedly weaker but still practically useful guarantee concerns the persistence of $\zstar$ as a local minimum of $f_{\leq \kmax}$ under single bit-flip moves.

\begin{definition}[Single bit-flip local minimum and basin barrier]\label{def:basin}

    A configuration $\vec{z} \in \{-1,+1\}^n$ is a \textit{single bit-flip local minimum} of $f\colon \{-1,+1\}^n \to \mathbb{R}$ if $f(\vec{z}) \leq f(\bitflip{i})$ for every $i \in [n]$, where $\bitflip{i}$ denotes $\vec{z}$ with coordinate $i$ negated. Its \textit{basin barrier} at $\vec{z}$ is:

    \begin{equation}\label{eq:barrier}
        \delta_f(\vec{z}) \coloneqq \min_{i \in [n]} \bigl[f(\bitflip{i}) - f(\vec{z})\bigr]
    \end{equation}

\end{definition}

The basin barrier is the smallest single bit-flip energy increase available from $\vec{z}$. A local minimum of $f$ remains so under $f' = f + \eta$ with $\|\eta\|_\infty < \delta_f(\vec{z})/2$; the perturbation can shift each neighbouring energy by at most this amount.

\begin{theorem}[Local minimum preservation under truncation]\label{thm:basin_preservation}

    Let $\zstar$ be a single bit-flip local minimum of $f$ with basin barrier $\delta_f(\zstar)$. If $\delta_f(\zstar) > 2\varepsilon(\kmax)$, then $\zstar$ is also a single bit-flip local minimum of $f_{\leq \kmax}$, with basin barrier:

    \begin{equation}\label{eq:barrier_loss}
        \delta_{f_{\leq \kmax}}(\zstar) \geq \delta_f(\zstar) - 2\varepsilon(\kmax)
    \end{equation}

\end{theorem}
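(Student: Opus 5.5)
The plan is to mirror the bracketing argument used for Theorem~\ref{thm:optimum_preservation}, now applied pairwise to $\zstar$ and each of its single bit-flip neighbours rather than to a pair of global minimizers. Abbreviate $\varepsilon = \varepsilon(\kmax)$. The only tool needed is the two-sided estimate of Equation~\eqref{eq:two_sided}, which follows from Theorem~\ref{thm:linfty} and holds at every point of the hypercube. In particular, it holds both at $\zstar$ and at every neighbour $\bitflip{i}$ of $\zstar$.

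For a fixed coordinate $i \in [n]$, write $\bitflip{i}$ for $\zstar$ with coordinate $i$ negated. Apply the lower bound $f_{\leq \kmax} \geq f - \varepsilon$ at $\bitflip{i}$ and the upper bound $f_{\leq \kmax} \leq f + \varepsilon$ at $\zstar$. This gives
\begin{equation*}
    f_{\leq \kmax}(\bitflip{i}) - f_{\leq \kmax}(\zstar) \geq \bigl(f(\bitflip{i}) - \varepsilon\bigr) - \bigl(f(\zstar) + \varepsilon\bigr) = f(\bitflip{i}) - f(\zstar) - 2\varepsilon .
\end{equation*}
By Definition~\ref{def:basin}, $f(\bitflip{i}) - f(\zstar) \geq \delta_f(\zstar)$ for every $i$. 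Hence each single bit-flip energy difference of $f_{\leq \kmax}$ at $\zstar$ is at least $\delta_f(\zstar) - 2\varepsilon$.

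Taking the minimum over $i \in [n]$ yields Equation~\eqref{eq:barrier_loss} directly, with no further estimates. The hypothesis $\delta_f(\zstar) > 2\varepsilon$ then makes this lower bound strictly positive. So $f_{\leq \kmax}(\zstar) < f_{\leq \kmax}(\bitflip{i})$ for every $i$, and $\zstar$ is a (strict) single bit-flip local minimum of $f_{\leq \kmax}$.

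There is no real obstacle here. The one point needing care is to use the $L^\infty$ bound at both endpoints with opposite signs, which is where the factor $2$ comes from.

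A sharper constant is available but not needed. The residual $\eta = f - f_{\leq \kmax}$ changes under flipping coordinate $i$ only through the monomials with $i \in S$, so one could replace $2\varepsilon$ by $2\sum_{|S|>\kmax,\, S\ni i} |c_S|$. I would not pursue this refinement, since the statement only requires the uniform bound.
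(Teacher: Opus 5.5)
Your proof is correct and is essentially the paper's argument: apply the two halves of Equation~\eqref{eq:two_sided} in opposite directions at $\vec{z}^{\star\oplus i}$ and at $\zstar$ to get a per-coordinate loss of at most $2\varepsilon(\kmax)$, then use $f(\vec{z}^{\star\oplus i}) - f(\zstar) \geq \delta_f(\zstar)$ and minimize over $i$. Your side remark is also valid: flipping $z_i$ changes the residual only through monomials with $i \in S$, so the loss at coordinate $i$ is in fact bounded by $2\sum_{|S|>\kmax,\,S\ni i}|c_S| \leq 2\varepsilon(\kmax)$.
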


\begin{proof}

    Abbreviate $\varepsilon = \varepsilon(\kmax)$ and fix $i \in [n]$. We lower-bound the single bit-flip energy difference $f_{\leq \kmax}(\vec{z}^{\star \oplus i}) - f_{\leq \kmax}(\zstar)$ by applying the two halves of Equation \eqref{eq:two_sided} in opposite directions to absorb the worst case in which truncation simultaneously raises the energy of $\zstar$ and lowers that of its neighbour by $\varepsilon$ each. Subtracting:

    \begin{align}\label{eq:basin-proof}
        f_{\leq \kmax}(\vec{z}^{\star \oplus i}) - f_{\leq \kmax}(\zstar)
        &\geq \bigl(f(\vec{z}^{\star \oplus i}) - f(\zstar)\bigr) - 2\varepsilon \nonumber \\
        &\geq \delta_f(\zstar) - 2\varepsilon
    \end{align}

    \noindent where the second inequality is Definition~\ref{def:basin}. By hypothesis this is positive, so $\zstar$ is a single bit-flip local minimum of $f_{\leq \kmax}$. Taking the minimum over $i$ gives Equation \eqref{eq:barrier_loss}.

\end{proof}

Theorem \ref{thm:basin_preservation} asserts that the global optimum, viewed as a fixed point of single bit-flip descent on $f$, remains a fixed point of the same descent on $f_{\leq \kmax}$ provided the basin barrier exceeds $2\varepsilon(\kmax)$. The relationship between $\Delta E^\star$ and $\delta_f(\zstar)$ determines which of Theorems~\ref{thm:optimum_preservation}--\ref{thm:basin_preservation} is the operative guarantee for a given problem. Single bit-flip neighbours of $\zstar$ are non-optimal configurations, so the minimum over them upper-bounds the minimum over all non-optimal configurations:

\begin{equation}\label{eq:gap-vs-barrier}
    \Delta E^\star \leq \delta_f(\zstar)
\end{equation}

\noindent with equality iff the second-best configuration is a Hamming-distance-$1$ neighbour of $\zstar$. The basin condition is therefore strictly weaker (and so easier to satisfy) than the gap condition, but yields the strictly weaker conclusion of local rather than global optimality. In structured CFNs with multiple near-optimal solutions separated by many bit flips, $\delta_f(\zstar) \gg \Delta E^\star$ is typical, and the basin condition can be satisfied for $\kmax$ values that fall well short of preserving the global optimum but for which $\zstar$ remains a local attractor of single bit-flip descent. In this regime, TBE composes effectively with downstream local-search refinement (Section \ref{sec:algorithm}).

\section{Spectral Perspective and the Noise Floor Condition for Basin Preservation}
\label{sec:spectral}

The bounds of Section \ref{sec:preservation} treat each Ising monomial as an independent contribution to the truncation error and saturate at the worst-case all-ones configuration. This neglects the cancellations between modes of opposite sign that determine the typical, as opposed to worst-case, behaviour of the truncated landscape. In this section we develop a complementary spectral perspective, which makes these cancellations explicit through the Walsh--Hadamard decomposition, and offers a more pragmatic view for determining appropriate $\kmax$.

\subsection{TBE as Orthogonal Projection}
\label{subsec:tbe-spectral}

By Definition~\ref{def:tbe} and the Walsh identification of Equation \eqref{eq:walsh_identification}, the TBE Walsh coefficients are:

\begin{equation}\label{eq:tbe_walsh_coefficients}
    \hat{f}_{\leq \kmax}(S) = \begin{cases} c_S & \text{if } |S| \leq \kmax \\ 0 & \text{if } |S| > \kmax \end{cases}
\end{equation}

\noindent so that TBE is the orthogonal projection of $f$ onto $V_{\leq \kmax}$. The truncation error lies in the orthogonal complement:

\begin{equation}\label{eq:trunc-error}
    \eta(\vec{z}) \coloneqq f(\vec{z}) - f_{\leq \kmax}(\vec{z}) = \sum_{|S| > \kmax} c_S\,\chi_S(\vec{z})
\end{equation}

\noindent and has $L^2$ norm under the uniform measure on the hypercube:

\begin{equation}\label{eq:trunc-var}
    \|\eta\|_2^2 = \E_{\vec{z}}\bigl[\eta(\vec{z})^2\bigr] = \sum_{k > \kmax} P_k \eqqcolon P_{>\kmax}
\end{equation}

\noindent The $L^2$ truncation amplitude $\sqrt{P_{>\kmax}}$ is the natural spectral counterpart of the $L^\infty$ bound $\varepsilon(\kmax)$. The two are related by the elementary inequality:

\begin{equation}\label{eq:l2-vs-linfty}
    \sqrt{P_{>\kmax}} \leq \varepsilon(\kmax) \leq \sqrt{\Nomit\,P_{>\kmax}}
\end{equation}

\noindent where $\Nomit \coloneqq 2^n - \sum_{k=0}^{\kmax} \binom{n}{k}$ counts the omitted Walsh modes. The $L^\infty$ bound is therefore typically a substantial overestimate of the $L^2$ truncation amplitude.

The chain of Equation \eqref{eq:l2-vs-linfty} is the standard $\ell^2 \leq \ell^1 \leq \sqrt{\Nomit}\,\ell^2$ chain on the coefficient vector $(c_S)_{|S|>\kmax}$. The lower bound expresses $\|x\|_2 \leq \|x\|_1$ (with equality when only one coefficient is nonzero), and the upper bound follows from Cauchy--Schwarz applied to $\sum 1 \cdot |c_S|$.

\subsection{The Noise Floor Condition: Brief Overview}
\label{subsec:noise-floor}

We now compare the typical truncation amplitude $\sqrt{P_{>\kmax}}$ to the typical amplitude of the surviving landscape, which an analogous calculation gives as $\sqrt{P_{\leq \kmax}}$. We use \emph{noise floor} in the spectral sense: the truncation residual $\eta$ acts as a low-amplitude perturbation that the surviving landscape $f_{\leq \kmax}$ ``rises above,'' analogously to a signal detectable above noise.

\begin{definition}[Truncation noise floor condition]\label{def:noise_floor}

The cutoff $\kmax$ satisfies the \textit{truncation noise floor condition} if:

\begin{equation}\label{eq:noise_floor}
    P_{\leq \kmax} \gg P_{>\kmax} \quad \text{where} \quad P_{\leq \kmax} = \sum_{k=1}^{\kmax} P_k
\end{equation}

\end{definition}

When Equation \eqref{eq:noise_floor} holds, the truncation removes a spectrally weak perturbation while preserving a spectrally dominant landscape. We now make the qualitative claim of basin preservation precise via a Gaussian-field argument.

\subsection{The Noise Floor Condition: A Gaussian-Field Argument}
\label{subsec:gaussian}

At a generic spin configuration, the signs $\chi_S(\vec{z})$ are roughly balanced: under the uniform measure on $\{-1,+1\}^n$, each $\chi_S$ is a Rademacher random variable with $\E[\chi_S \chi_T] = \delta_{S,T}$ by Equation \eqref{eq:walsh-ortho}, and partial cancellations between modes of opposite sign reduce the magnitude of $\eta(\vec{z})$ from $O(\Nomit\,|c|_{\mathrm{typ}}) = O(\varepsilon(\kmax))$ to $O(\sqrt{\Nomit}\,|c|_{\mathrm{typ}}) = O(\sqrt{P_{>\kmax}})$. We therefore seek to turn the heuristic of Definition~\ref{def:noise_floor} into a checkable condition on the spectral profile. To do so, we move to a \textit{probabilistic ensemble} on the couplings, in the spirit of disordered-system models in statistical physics \cite{Talagrand2011}.

\begin{definition}[Random-coupling ensemble]\label{def:ensemble}

    Fix $\kmax$. The \textit{random-coupling ensemble} draws the omitted couplings $\{c_S : |S| > \kmax\}$ as independent zero-mean random variables with variances $\E[c_S^2] = \pi_S$ and with uniformly bounded fourth moments $\E[c_S^4] \leq C \pi_S^2$ for some constant $C$. The kept couplings $\{c_S : |S| \leq \kmax\}$ may be either deterministic or drawn from a parallel distribution under analogous assumptions. The \textit{expected spectral powers} are:

    \begin{equation}\label{eq:expected-powers}
        \E[P_k] = \sum_{|S| = k} \pi_S, \qquad \E[P_{>\kmax}] = \sum_{k > \kmax} \E[P_k]
    \end{equation}

\end{definition}

This ensemble captures the generic behaviour of truncation residuals on landscapes derived from randomized CFNs; the variance profile $\{\pi_S\}$ is a free input and may be specialized to match any expected spectral shape. The bounded fourth moment is mild: it is satisfied by any sub-Gaussian or bounded distribution on the $c_S$, and is what we use below to convert a max-variance condition into a useful central limit theorem.

\begin{theorem}[Pointwise Gaussian limit for the truncation residual]\label{prop:gaussian_pointwise}

    Let $\vec{z} \in \{-1,+1\}^n$ be fixed, let the omitted couplings be drawn from the ensemble of Definition~\ref{def:ensemble}, and assume the max-variance condition:

    \begin{equation}\label{eq:max-var}
        \frac{\max_{|S| > \kmax} \pi_S}{\E[P_{>\kmax}]} \to 0
    \end{equation}

    \noindent as the number of omitted modes grows. Then the truncation residual $\eta(\vec{z}) = \sum_{|S| > \kmax} c_S \chi_S(\vec{z})$ has variance $\E[P_{>\kmax}]$ independent of $\vec{z}$, and converges to a centered Gaussian:

    \begin{equation}\label{eq:gaussian-amplitude}
        \eta(\vec{z}) \xrightarrow{d} \mathcal{N}\bigl(0,\,\E[P_{>\kmax}]\bigr)
    \end{equation}

\end{theorem}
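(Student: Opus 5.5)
Since $\vec{z}$ is fixed, each $\chi_S(\vec{z}) \in \{-1,+1\}$ is a deterministic sign, so $\eta(\vec{z})$ is a sum of independent, zero-mean random variables $X_S \coloneqq c_S\,\chi_S(\vec{z})$ indexed by the omitted sets $|S| > \kmax$. The proof therefore reduces to a Lindeberg--Feller central limit theorem for a triangular array, where the array is indexed by the growing number of omitted modes.

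\emph{Step 1 (moments).} Independence and zero mean of the $c_S$ give $\E[\eta(\vec{z})] = 0$. The cross terms vanish, and $\chi_S(\vec{z})^2 = 1$, so
$\E[\eta(\vec{z})^2] = \sum_{|S|>\kmax} \pi_S\,\chi_S(\vec{z})^2 = \sum_{|S|>\kmax}\pi_S = \E[P_{>\kmax}]$
by Equation \eqref{eq:expected-powers}. This is manifestly independent of $\vec{z}$. The sign flips also leave the fourth moments unchanged: $\E[X_S^4] = \E[c_S^4] \leq C\pi_S^2$.

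\emph{Step 2 (normalization).} Set $s^2 \coloneqq \E[P_{>\kmax}]$ and consider $\eta(\vec{z})/s$. It suffices to show $\eta(\vec{z})/s \xrightarrow{d} \mathcal{N}(0,1)$. The claimed statement \eqref{eq:gaussian-amplitude} is then read in this normalized sense, equivalently as a Gaussian approximation with variance $\E[P_{>\kmax}]$, since $s$ itself varies along the sequence.

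\emph{Step 3 (Lyapunov condition).} I would verify Lyapunov's condition with exponent $4$, which implies Lindeberg. The required quantity is
\begin{equation*}
    \frac{1}{s^4}\sum_{|S|>\kmax} \E[X_S^4] \leq \frac{C}{s^4}\sum_{|S|>\kmax}\pi_S^2 \leq \frac{C\,\max_{|S|>\kmax}\pi_S}{s^4}\sum_{|S|>\kmax}\pi_S = C\,\frac{\max_{|S|>\kmax}\pi_S}{\E[P_{>\kmax}]},
\end{equation*}
and this tends to $0$ by the max-variance hypothesis \eqref{eq:max-var}. Lyapunov's CLT then yields the result.

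The only real obstacle is this step, namely turning the max-variance condition into a Lindeberg-type condition. The uniform fourth-moment bound from Definition~\ref{def:ensemble} handles it directly through the $\sum\pi_S^2 \leq \max\pi_S \sum\pi_S$ estimate. Without that bound the max-variance condition alone would not suffice; heavy-tailed individual couplings could violate Lindeberg. This is exactly why the ensemble assumes bounded fourth moments.
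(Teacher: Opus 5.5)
Your proposal is correct and follows essentially the same route as the paper. Both compute the $\vec{z}$-independent variance from $\chi_S(\vec{z})^2 = 1$, then verify Lyapunov's condition at fourth order using $\sum \pi_S^2 \leq \pi_{\max} \sum \pi_S$ together with the max-variance hypothesis. The paper additionally writes out the Markov-inequality step from Lyapunov to Lindeberg before invoking Lindeberg--Feller, and it also concludes in the normalized form $\eta(\vec{z})/s \xrightarrow{d} \mathcal{N}(0,1)$, as you propose.
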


\begin{proof}

    Fix $\vec{z} \in \{-1,+1\}^n$ and write $X_S \coloneqq c_S\,\chi_S(\vec{z})$ and $s_{\Nomit}^2 \coloneqq \E[P_{>\kmax}]$, so that $\eta(\vec{z}) = \sum_{|S| > \kmax} X_S$. Since $\chi_S(\vec{z}) \in \{-1,+1\}$ is deterministic, each $X_S$ is a Borel function of the single random variable $c_S$, so $\{X_S\}$ is an independent mean-zero family with $\Var[X_S] = \chi_S(\vec{z})^2\,\pi_S = \pi_S$. Summing gives:

    \begin{equation}\label{eq:eta-var-explicit}
        \Var[\eta(\vec{z})] = \sum_{|S| > \kmax} \pi_S = s_{\Nomit}^2
    \end{equation}

    \noindent independently of $\vec{z}$. We now verify Lyapunov's condition at order $\delta = 2$ \cite[Thm.~27.3]{Billingsley1995}. Using $\chi_S(\vec{z})^4 = 1$ and the bounded-fourth-moment hypothesis $\E[c_S^4] \leq C\pi_S^2$:

    \begin{equation}\label{eq:lyapunov-bound}
        \frac{1}{s_{\Nomit}^4}\sum_{|S| > \kmax} \E[X_S^4] \leq \frac{C}{s_{\Nomit}^4}\sum_{|S| > \kmax} \pi_S^2 \leq \frac{C\,\pi_{\max}}{s_{\Nomit}^2}
    \end{equation}

    \noindent where $\pi_{\max} = \max_{|S| > \kmax} \pi_S$ and the second inequality factors out the largest variance: $\sum \pi_S^2 \leq \pi_{\max} \sum \pi_S = \pi_{\max}\,s_{\Nomit}^2$. The right-hand side tends to zero by Equation \eqref{eq:max-var}, so Lyapunov's condition is satisfied.

    Applying Markov's inequality termwise on $\{|X_S| > \varepsilon\,s_{\Nomit}\}$ converts the Lyapunov condition to the Lindeberg condition:

    \begin{align}\label{eq:lindeberg-bound}
        \frac{1}{s_{\Nomit}^2}\sum_{|S| > \kmax} &\E\bigl[X_S^2\,\mathbf{1}\{|X_S| > \varepsilon\,s_{\Nomit}\}\bigr] \nonumber \\ &\leq \frac{1}{\varepsilon^2 s_{\Nomit}^4}\sum_{|S| > \kmax} \E[X_S^4] \to 0
    \end{align}

    \noindent for every $\varepsilon > 0$. The Lindeberg--Feller central limit theorem (see Ref.~\cite[Thm.~3.4.5]{Durrett2019}) then gives $\eta(\vec{z})/s_{\Nomit} \xrightarrow{d} \mathcal{N}(0,1)$, equivalent to Equation \eqref{eq:gaussian-amplitude}.

\end{proof}

Theorem \ref{prop:gaussian_pointwise} establishes the noise scale of truncation: at any fixed configuration, the residual is centered Gaussian with standard deviation $\sqrt{\E[P_{>\kmax}]}$ rather than the worst-case $\varepsilon(\kmax)$. To translate this into a basin-preservation statement, we recall that basin preservation at $\zstar$ requires that each bit-flip energy difference $f(\bitflip{i}) - f(\zstar) \geq 0$ retain its sign after truncation, which holds whenever the residual bit-flip difference $\eta(\bitflip{i}) - \eta(\zstar)$ is smaller in magnitude than the corresponding bit-flip difference of $f_{\leq \kmax}$. Lemma \ref{lem:barrier_variance} provides the variance of the latter, fixing its typical magnitude.

\begin{lemma}[Single bit-flip variance]\label{lem:barrier_variance}

    Under Definition~\ref{def:ensemble} applied also to $\{c_S : |S| \leq \kmax\}$, the
    single bit-flip energy difference of $f_{\leq \kmax}$ at coordinate $i$ has variance:

    \begin{equation}\label{eq:bit-flip-variance}
        v_i \coloneqq \Var\bigl[f_{\leq \kmax}(\bitflip{i}) - f_{\leq \kmax}(\vec{z})\bigr]
        = 4 \sum_{\substack{S \ni i \\ |S| \leq \kmax}} \pi_S
    \end{equation}

    \noindent independently of $\vec{z}$. The average over coordinates satisfies:

    \begin{equation}\label{eq:bit-flip-average}
        \frac{1}{n}\sum_{i=1}^n v_i
        = \frac{4}{n}\sum_{|S| \leq \kmax} |S|\,\pi_S
        \leq \frac{4\,\kmax\,\E[P_{\leq \kmax}]}{n}
    \end{equation}

    \noindent and we denote the per-coordinate floor by $v_{\min} \coloneqq \min_{i \in [n]} v_i$.

\end{lemma}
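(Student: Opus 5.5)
The plan is to write the bit-flip difference explicitly in the Walsh basis and then read off its variance from the independence of the couplings. Fix $\vec{z}$ and a coordinate $i$.

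\emph{Step 1: compute the bit-flip difference.} For any $S$, negating coordinate $i$ gives $\chi_S(\bitflip{i}) = -\chi_S(\vec{z})$ when $i \in S$, and $\chi_S(\bitflip{i}) = \chi_S(\vec{z})$ otherwise. Substituting into Equation \eqref{eq:tbe}, all monomials not containing $i$ cancel. This leaves
\[
f_{\leq \kmax}(\bitflip{i}) - f_{\leq \kmax}(\vec{z}) = -2\sum_{\substack{S \ni i \\ |S| \leq \kmax}} c_S\,\chi_S(\vec{z}).
\]

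\emph{Step 2: take the variance.} Each $\chi_S(\vec{z}) \in \{-1,+1\}$ is deterministic once $\vec{z}$ is fixed. The kept couplings are independent and zero-mean with $\E[c_S^2] = \pi_S$ under the assumed extension of Definition~\ref{def:ensemble}. As in the proof of Theorem~\ref{prop:gaussian_pointwise}, the summands are therefore independent with variances $\chi_S(\vec{z})^2\pi_S = \pi_S$. Summing and multiplying by the factor $(-2)^2 = 4$ gives Equation \eqref{eq:bit-flip-variance}. The expression contains no $\vec{z}$, which proves the independence claim.

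There is one subtlety to flag. If the kept couplings are deterministic rather than random, the "variance" must be read as the ensemble variance under the parallel distribution, which the hypothesis of the lemma grants. If some kept couplings have nonzero mean, only the variance statement survives, and that is all that is claimed.

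\emph{Step 3: average over coordinates.} Sum $v_i$ over $i \in [n]$ and exchange the order of summation. Each $S$ with $|S| \leq \kmax$ is counted once for every $i \in S$, i.e.\ $|S|$ times, which gives $\frac{4}{n}\sum_{|S|\leq\kmax}|S|\,\pi_S$.

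For the inequality, bound $|S| \leq \kmax$ and use $\sum_{1 \leq |S| \leq \kmax}\pi_S = \E[P_{\leq \kmax}]$, the analogue of Equation \eqref{eq:expected-powers} for the kept modes. The $S=\emptyset$ term has weight $|S|=0$, so it drops out and agrees with $P_{\leq\kmax}$ starting at $k=1$ in Definition~\ref{def:noise_floor}.

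The definition of $v_{\min}$ is notational and needs no proof. There is no real obstacle in this argument; the only care needed is the bookkeeping in the double-sum exchange and the handling of the empty set.
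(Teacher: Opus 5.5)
Your proposal is correct and matches the paper's proof essentially step for step: the sign-flip identity reduces the difference to $-2\sum_{S\ni i,\,|S|\le\kmax} c_S\chi_S(\vec z)$, independence of the kept couplings gives variance $4\pi_S$ per term (with $\chi_S(\vec z)^2=1$ removing the $\vec z$-dependence), and exchanging the double sum and then bounding $|S|\le\kmax$ gives the average. Your note that the $S=\emptyset$ term carries weight zero, consistent with $P_{\le\kmax}$ starting at $k=1$, is a detail the paper leaves implicit; your aside about deterministic kept couplings is unnecessary, since the lemma's hypothesis is exactly that they are drawn from the ensemble (otherwise the variance would simply be zero).
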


\begin{proof}

    For any $S \subseteq [n]$, the Walsh function $\chi_S(\vec{z}) = \prod_{j \in S} z_j$ depends on coordinate $i$ if and only if $i \in S$, and negating $z_i$ negates the product:

    \begin{equation}\label{eq:chi-flip}
        \chi_S(\bitflip{i}) = \begin{cases} -\chi_S(\vec{z}) & \text{if } i \in S \\ \phantom{-}\chi_S(\vec{z}) & \text{if } i \notin S \end{cases}
    \end{equation}

    \noindent Each term in the Walsh expansion of $f_{\leq \kmax}$ therefore contributes $c_S[\chi_S(\bitflip{i}) - \chi_S(\vec{z})] = -2 c_S \chi_S(\vec{z})$ when $i \in S$ and zero otherwise:

    \begin{equation}\label{eq:bit-flip-diff}
        f_{\leq \kmax}(\bitflip{i}) - f_{\leq \kmax}(\vec{z}) = -2 \sum_{\substack{S \ni i \\ |S| \leq \kmax}} c_S \chi_S(\vec{z})
    \end{equation}

    \noindent The summands are deterministic multiples of distinct independent couplings, hence themselves independent, with variance $4\,\chi_S(\vec{z})^2\,\pi_S = 4\,\pi_S$ each. Summing the variances thus gives Equation \eqref{eq:bit-flip-variance}, independent of $\vec{z}$ since the sign factors $\chi_S(\vec{z})$ are eliminated by squaring.

    Then, summing Equation \eqref{eq:bit-flip-variance} over $i$ and exchanging the order of summation:

    \begin{align}\label{eq:total-sum-exchange}
        \sum_{i=1}^n v_i = \sum_{i=1}^n \sum_{\substack{S \ni i \\ |S| \leq \kmax}} 4\,\pi_S &= 4 \sum_{|S| \leq \kmax} \pi_S\,\bigl|\{i \in [n] : i \in S\}\bigr| \nonumber \\ &= 4 \sum_{|S| \leq \kmax} |S|\,\pi_S
    \end{align}

    \noindent since each $S$ contributes once for each of its $|S|$ elements. Bounding $|S| \leq \kmax$ pointwise in the sum gives $\sum_{|S| \leq \kmax} |S|\,\pi_S \leq \kmax\,\E[P_{\leq \kmax}]$, which on dividing by $n/4$ yields Equation \eqref{eq:bit-flip-average}.

\end{proof}

Equation \eqref{eq:bit-flip-average} reveals how system size enters. The average per-coordinate bit-flip variance is at most $4\kmax \E[P_{\leq \kmax}]/n$, so the typical single bit-flip difference at a representative coordinate scales as $\sqrt{\kmax\,\E[P_{\leq\kmax}]/n}$. Combining this with the residual scale from Theorem \ref{prop:gaussian_pointwise} yields the noise floor condition detailed in Theorem \ref{thm:strong_noise_floor}.

\begin{theorem}[Strong noise floor condition]\label{thm:strong_noise_floor}

    Under the ensemble of Definition~\ref{def:ensemble} applied to all couplings, suppose:

    \begin{equation}\label{eq:strong-noise-floor}
        \frac{\E[P_{>\kmax}]}{\E[P_{\leq \kmax}]} \ll \frac{\kmax}{n}
    \end{equation}

    \noindent Then with high probability over the coupling draw, at any fixed configuration $\vec{z}$ and any typical coordinate $i$:

    \begin{equation}\label{eq:sign-preservation}
        \mathrm{sign}\bigl[f(\bitflip{i}) - f(\vec{z})\bigr] = \mathrm{sign}\bigl[f_{\leq \kmax}(\bitflip{i}) - f_{\leq \kmax}(\vec{z})\bigr]
    \end{equation}

\end{theorem}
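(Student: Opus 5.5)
We decompose the full bit-flip difference into a kept part and a residual part and compare their scales. Write
\begin{equation*}
    \Delta_i f(\vec z) = \Delta_i f_{\leq \kmax}(\vec z) + \Delta_i \eta(\vec z),
\end{equation*}
where $\Delta_i g(\vec z) \coloneqq g(\bitflip{i}) - g(\vec z)$. The signs in Equation \eqref{eq:sign-preservation} agree whenever $|\Delta_i \eta(\vec z)| < |\Delta_i f_{\leq \kmax}(\vec z)|$. It therefore suffices to show that the residual difference is small compared with the kept difference, with high probability.

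First we treat the residual. Applying Equation \eqref{eq:chi-flip} to the omitted modes gives
\begin{equation*}
    \Delta_i\eta(\vec z) = -2\sum_{S\ni i,\,|S|>\kmax} c_S\chi_S(\vec z).
\end{equation*}
By the same independence computation as in Lemma \ref{lem:barrier_variance}, this has mean zero and variance $w_i = 4\sum_{S\ni i,\,|S|>\kmax}\pi_S$. Averaging over $i$ as in Equation \eqref{eq:total-sum-exchange} yields
\begin{equation*}
    \frac1n\sum_i w_i = \frac4n\sum_{|S|>\kmax}|S|\,\pi_S \leq \frac{4\kfull\,\E[P_{>\kmax}]}{n}.
\end{equation*}
For a typical coordinate $i$ (one whose $w_i$ is within a constant factor of the average, which we may enforce by Markov over the uniform choice of $i$), Chebyshev gives $|\Delta_i\eta(\vec z)| \lesssim \sqrt{\kfull\,\E[P_{>\kmax}]/n}$ with high probability.

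Next we treat the kept part. Lemma \ref{lem:barrier_variance} gives $\Var[\Delta_i f_{\leq\kmax}(\vec z)] = v_i$, with average of order $\kmax\,\E[P_{\leq\kmax}]/n$. For a typical coordinate we take $v_i$ comparable to this average. We then need an anti-concentration statement saying that $|\Delta_i f_{\leq\kmax}(\vec z)|$ is not much smaller than $\sqrt{v_i}$. We would obtain it from the Lyapunov/Lindeberg argument of Theorem \ref{prop:gaussian_pointwise}, applied to the kept sum under the analogous max-variance assumption. This gives approximate normality, so $\Pr[|\Delta_i f_{\leq\kmax}| < t\sqrt{v_i}] \approx \Pr[|\mathcal N(0,1)|<t] = O(t)$.

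To combine the two, we choose a small threshold $t$ and require the residual scale to be at most $t^2$ times the kept scale in ratio of variances. Equation \eqref{eq:strong-noise-floor} supplies this ratio, up to the factor $\kfull/\kmax$, which is $O(1)$ for the bounded register sizes of interest and which we absorb into the ``$\ll$''. A union of the two failure events then has probability $O(t)$ plus a Chebyshev term. Both vanish in the limit implied by $\ll$, which gives Equation \eqref{eq:sign-preservation}.

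The main obstacle is the anti-concentration step together with the precise meaning of ``typical coordinate.'' Lower-bounding $|\Delta_i f_{\leq\kmax}|$ requires either a Gaussian-limit hypothesis on the kept couplings, or a Paley--Zygmund argument using the fourth-moment bound $\E[c_S^4]\le C\pi_S^2$. Paley--Zygmund only gives constant probability, so the Berry--Esseen/CLT route is what we would try first.

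A second subtlety is that the residual variance carries the weight $|S|>\kmax$, not $\kmax$. The stated condition with $\kmax/n$ is therefore only heuristically sharp. We would state the result with the explicit ratio $\sum_{|S|>\kmax}|S|\pi_S \big/ \sum_{|S|\le\kmax}|S|\pi_S$ and note that this ratio is controlled by Equation \eqref{eq:strong-noise-floor} up to the factor $\kfull/\kmax$.
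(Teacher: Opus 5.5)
Your argument is correct at the same heuristic level as the paper's. It shares the paper's skeleton: split the bit-flip difference into kept and residual parts, and take the signal scale from Lemma~\ref{lem:barrier_variance}. Where you differ is in how you treat the residual.

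\textbf{How the paper handles the residual.} The paper does not average the residual over coordinates. It bounds $\Var[\eta(\bitflip{i})-\eta(\vec z)] = 4\sum_{S\ni i,\,|S|>\kmax}\pi_S \le 4\,\E[P_{>\kmax}]$ uniformly in $i$ and gets a tail bound from Theorem~\ref{prop:gaussian_pointwise}. It then compares $\sqrt{\E[P_{>\kmax}]}$ with the averaged signal scale $\sqrt{\kmax\,\E[P_{\le\kmax}]/n}$. The factor $n$ in the hypothesis comes exactly from this mismatch between a uniform noise bound and an averaged signal bound.

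\textbf{How your route differs.} Your Markov-over-coordinates averaging puts a $1/n$ on both sides. What you actually need is roughly $\kfull\,\E[P_{>\kmax}]/\E[P_{\le\kmax}] \ll \kmax$, or your explicit $|S|$-weighted ratio. This is weaker than the stated hypothesis because $n\ge\kfull$. So you do not need to absorb a factor $\kfull/\kmax$ into ``$\ll$''; the stated condition already implies your requirement with room to spare.

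\textbf{What each route buys.} Your route gains two things:
\begin{itemize}
\item The residual needs only Chebyshev, not the max-variance/CLT input.
\item You make explicit the anti-concentration step for the kept difference. The paper's ``$\Omega(\cdot)$ with high probability'' glosses over this: a centered Gaussian lies below $t\sigma$ with probability of order $t$. The conclusion is therefore really a $1-O(t)$ statement, with $t\to0$ forced by ``$\ll$'', exactly as you set it up.
\end{itemize}
The paper's route gains uniformity in $i$ for the noise term, so only the signal needs a typical coordinate.

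\textbf{A shared gap.} Both arguments assume homogeneity at the signal step without proving it. Lemma~\ref{lem:barrier_variance} gives only the upper bound $\frac1n\sum_i v_i \le 4\kmax\,\E[P_{\le\kmax}]/n$. Markov over $i$ yields upper-typicality, never a lower bound of the form $v_i\gtrsim\frac1n\sum_j v_j$ for most $i$. The honest lower bound on the average is only $4\,\E[P_{\le\kmax}]/n$. You should state ``$v_i$ comparable to the average'' explicitly as a hypothesis, not as something you ``take''.

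With only that honest lower bound, your sharper residual estimate still closes under the stated hypothesis whenever $n\gtrsim\kfull\kmax$. The paper's uniform residual bound would then fall short by a factor of $\kmax$. So your averaging is the more robust choice here.
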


\begin{proof}

    The full function bit-flip difference decomposes as:

    \begin{equation}\label{eq:diff-decomp}
        f(\bitflip{i}) - f(\vec{z}) = \bigl[f_{\leq \kmax}(\bitflip{i}) - f_{\leq \kmax}(\vec{z})\bigr] + \bigl[\eta(\bitflip{i}) - \eta(\vec{z})\bigr]
    \end{equation}

    \noindent The two quantities in Equation \eqref{eq:sign-preservation} therefore differ exactly by the noise term, and sign preservation holds whenever the noise magnitude is smaller than the signal magnitude.

    \textit{Noise scale.} The same calculation as Lemma \ref{lem:barrier_variance}, applied to $\eta$ in place of $f_{\leq \kmax}$ (with the sum restricted to $|S| > \kmax$ rather than $|S| \leq \kmax$), gives:

    \begin{equation}\label{eq:noise-var}
        \Var\bigl[\eta(\bitflip{i}) - \eta(\vec{z})\bigr] = 4\sum_{\substack{S \ni i \\ |S| > \kmax}} \pi_S \leq 4\,\E[P_{>\kmax}]
    \end{equation}

    \noindent uniformly in $i$. The residual bit-flip difference is itself a sum over independent couplings with deterministic Walsh-sign coefficients, so Theorem \ref{prop:gaussian_pointwise} applies in the same form. Gaussian tail bounds then give:

    \begin{equation}\label{eq:noise-tail}
        \bigl|\eta(\bitflip{i}) - \eta(\vec{z})\bigr| = O\bigl(\sqrt{\E[P_{>\kmax}]}\bigr)
    \end{equation}

    \noindent with high probability over the coupling draw.

    \textit{Signal scale.} By Lemma \ref{lem:barrier_variance} applied to $f_{\leq \kmax}$, the total bit-flip variance summed over coordinates is at most $4\kmax\,\E[P_{\leq \kmax}]$, so the average per-coordinate variance is at most $4\kmax\,\E[P_{\leq \kmax}]/n$. At a typical $i$:

    \begin{equation}\label{eq:signal-scale}
        \bigl|f_{\leq \kmax}(\bitflip{i}) - f_{\leq \kmax}(\vec{z})\bigr| = \Omega\bigl(\sqrt{\kmax\,\E[P_{\leq \kmax}]/n}\bigr)
    \end{equation}

    \noindent with high probability. Sign preservation in Equation \eqref{eq:diff-decomp} holds whenever the noise scale of Equation \eqref{eq:noise-tail} is dominated by the signal scale of Equation \eqref{eq:signal-scale}:

    \begin{equation}\label{eq:scale-comparison}
        \sqrt{\E[P_{>\kmax}]} \ll \sqrt{\kmax\,\E[P_{\leq \kmax}]/n}
    \end{equation}

    \noindent Squaring both sides and dividing by $\E[P_{\leq \kmax}]$ yields Equation \eqref{eq:strong-noise-floor}.

\end{proof}

\begin{corollary}[Heuristic basin preservation]\label{cor:basin}

    Under Theorem \ref{thm:strong_noise_floor}, with high probability over the coupling draw, $\zstar$ remains a single bit-flip local minimum of $f_{\leq \kmax}$ whenever it is a single bit-flip local minimum of $f$, and basin assignments under single bit-flip descent on $f$ and $f_{\leq \kmax}$ coincide on a high-probability fraction of the hypercube.

\end{corollary}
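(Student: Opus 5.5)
The plan is to obtain Corollary~\ref{cor:basin} from Theorem~\ref{thm:strong_noise_floor} in two passes. The first pass works at the single point $\zstar$ and uses a union bound over its $n$ neighbours. The second pass averages over the hypercube, using the fact that descent trajectories and basin labels are determined entirely by the signs of single bit-flip differences.

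\textbf{Step 1: local minimality of $\zstar$.} Fix $\zstar$ and write the decomposition of Equation~\eqref{eq:diff-decomp} at each $i\in[n]$:
\[
f_{\leq\kmax}(\vec{z}^{\star\oplus i})-f_{\leq\kmax}(\zstar)=\bigl[f(\vec{z}^{\star\oplus i})-f(\zstar)\bigr]-\bigl[\eta(\vec{z}^{\star\oplus i})-\eta(\zstar)\bigr].
\]
The first bracket is nonnegative because $\zstar$ is a local minimum of $f$. So it suffices that, for every $i$, the noise term never exceeds the full-landscape difference whenever the latter is positive. Theorem~\ref{thm:strong_noise_floor} gives this sign agreement with high probability for each typical $i$.

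To cover all $i$ at once, I would sharpen the Gaussian tail behind Equation~\eqref{eq:noise-tail}. The bit-flip residual has variance at most $4\,\E[P_{>\kmax}]$ by Equation~\eqref{eq:noise-var}, so I would take a threshold of order $\sqrt{\E[P_{>\kmax}]\log n}$. The union bound over $n$ neighbours then costs only a $\log n$ factor. That factor is absorbed by the ``$\ll$'' in Equation~\eqref{eq:strong-noise-floor}, which I will read as beating the ratio $\kmax/n$ by at least a logarithmic margin.

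\textbf{Step 2: basin assignments.} Single bit-flip descent on $f$, say steepest or first-improvement with a fixed tie-breaking rule, is a deterministic function of the sign pattern of the differences $f(\bitflip{i})-f(\vec z)$ over all $\vec z$ and $i$. Under a steepest-descent rule it also depends on the ordering of the negative differences. So the two descents agree at every $\vec z$ where all relevant signs, and that ordering, coincide.

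Let $B$ be the set of configurations $\vec z$ at which some neighbouring sign flips between $f$ and $f_{\leq\kmax}$. By the pointwise statement of Theorem~\ref{thm:strong_noise_floor}, which holds at any fixed $\vec z$, together with linearity of expectation over the uniform $\vec z$ and Markov's inequality over the coupling draw, $|B|/2^n$ is small with high probability. Basin labels then agree at every $\vec z$ whose entire descent path avoids $B$. The last step bounds the fraction of starting points whose path meets $B$.

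\textbf{Main obstacle.} The hard part is the final step of Step 2: descent paths can funnel many starting points through a few bad configurations. A small measure of $B$ therefore does not immediately imply that only a small fraction of starting points have corrupted basin labels. My first attempt would use path length: single-flip descent paths have length at most $n$ and typically $O(n)$, so a union bound along the path gives a failure fraction of at most $n\cdot|B|/2^n$. The missing factor of $n$ would again be absorbed into the strong noise floor margin.

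The other difficulty is that ``typical coordinate'' in Theorem~\ref{thm:strong_noise_floor} excludes coordinates with small $v_i$. I would handle this by assuming $v_{\min}$ is comparable to the average in Equation~\eqref{eq:bit-flip-average}, consistent with the heuristic label of the corollary.
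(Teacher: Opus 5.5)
Your Step~1 is the paper's own argument. The paper applies Theorem~\ref{thm:strong_noise_floor} at $\zstar$ for each $i$ and uses $f(\vec{z}^{\star\oplus i})-f(\zstar)\geq 0$. It then handles the basin claim with ``iterating over neighbouring configurations,'' and afterwards concedes that the conclusion is not uniform over $[n]$. The problem lies in the sharpenings you add to make this uniform.

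\textbf{Step 1: the union bound is not a $\log n$ loss.} At a fixed configuration, write $X_i=f_{\leq\kmax}(\bitflip{i})-f_{\leq\kmax}(\vec z)$ and $Y_i=\eta(\bitflip{i})-\eta(\vec z)$. The failure probability at coordinate $i$ is governed by the small-ball probability of the signal $X_i$, not by the tail of $Y_i$. For near-Gaussian $X_i$ of variance $v_i$, $\Pr(|X_i|<\tau)\approx\tau\sqrt{2/(\pi v_i)}$, which is only linear in $\tau$. Capping $|Y_i|$ at $\tau\sim\sqrt{\E[P_{>\kmax}]\log n}$ therefore leaves a union-bound cost of order $n\tau/\sqrt{v_{\min}}$. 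With $v_{\min}\sim\kmax\,\E[P_{\leq\kmax}]/n$, you would need $\E[P_{>\kmax}]/\E[P_{\leq\kmax}]\ll\kmax/(n^{3}\log n)$. That is a polynomial strengthening of Equation~\eqref{eq:strong-noise-floor}, not something the ``$\ll$'' absorbs.

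The $\sqrt{\log n}$ noise tail is itself not available either. The CLT gives no tail bound at that scale, and the fourth-moment hypothesis yields only $O(t^{-4})$ tails, so you would need sub-Gaussian couplings.

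Moreover, $\zstar$ cannot be fixed in advance: it is selected by the couplings. Being the minimizer, or even a local minimum, is typically an exponentially rare event at any given $\vec z$. A high-probability statement at fixed $\vec z$ therefore does not transfer to $\zstar$ without a union bound over all $2^n$ candidates. Per-point failure probabilities that are only linear in the noise-to-signal ratio cannot afford that bound.

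\textbf{Step 2: path length and funnelling.} The claim that single-flip descent paths have length at most $n$ is false. Strict decrease only forbids revisits, which gives a bound of $2^n-1$. Steepest descent is known to take exponentially many steps even on quadratic Ising energies (Haken--Luby).

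More fundamentally, a union bound along the path gives $\sum_t\Pr_{\vec z_0}[\vec z_t\in B]$. Here $\vec z_t$ is neither uniform on the cube nor independent of the couplings, so these terms are not $|B|/2^n$. This is exactly the funnelling obstacle you identified, and path length does not address it. Every descent path ends at a local minimum of $f$, a vanishing, coupling-selected set onto which descent pushes the entire uniform measure. A single flipped neighbour sign at the bottom of a large basin can therefore mislabel every start in that basin, however small $|B|/2^n$ is.

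The basin claim actually requires two things: Step~1 holding simultaneously at all relevant local minima, and control of $B$ under the occupation measure of the descent dynamics. Neither follows from the pointwise Theorem~\ref{thm:strong_noise_floor}. Your argument is thus the same heuristic as the paper's, which explicitly labels the corollary heuristic. You should present it that way, not as a reduction with only $\log n$ and $n$ losses.
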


\begin{proof}

    Apply Theorem \ref{thm:strong_noise_floor} at $\vec{z} = \zstar$ and each coordinate $i \in [n]$; sign preservation of the bit-flip difference is precisely the statement that $f_{\leq \kmax}(\vec{z}^{\star\oplus i}) - f_{\leq \kmax}(\zstar)$ has the same sign as $f(\vec{z}^{\star\oplus i}) - f(\zstar) \geq 0$, so $\zstar$ remains a local minimum. Iterating over neighbouring configurations gives the claim.

\end{proof}

The argument is heuristic in three respects, each setting a direction for strengthening: the Gaussian approximation~\eqref{eq:gaussian-amplitude} holds pointwise rather than uniformly; Theorem \ref{thm:strong_noise_floor} conditions on a typical rather than worst-case coordinate $i$ and so does not yield a uniform conclusion over $[n]$; and the conclusion is probabilistic over the coupling ensemble rather than deterministic for any given problem. We claim it only as a heuristic and leave further rigour to future work. The remaining question for us now is: what features of the underlying CFN produce a favourable spectral profile in the first place?

\section{Cost Function Network Structure and Spectral Decay Favourable to TBE}
\label{sec:cfn-structure}

The noise floor condition controls when a given exact-binary HUBO admits a small-$\kmax$ truncation, but does not specify which CFN structures produce a favourable spectral profile. We address that question by expressing the Walsh coefficients $c_S$ of the encoded HUBO directly in terms of the Walsh transforms of the underlying cost tables $\alpha_i$ and $\beta_{i,j}$ taken on their respective sub-hypercubes. The spectral profile decomposes additively over cost tables, and the question of when $P_k$ decays rapidly reduces to the well-studied question of when the Walsh transform of a function on the hypercube is concentrated at low degree \cite{ODonnell2014, KahnKalaiLinial1988}. The connection between low-degree concentration and smoothness is mediated by a discrete calculus on the hypercube, which we develop in Section \ref{subsec:discrete-calc} after the cost-table identification.

\subsection{Walsh Coefficients of the Encoded HUBO}
\label{subsec:cfn-walsh}

The indicator polynomials of Equation \eqref{eq:indicator_walsh} are already exhibited as Walsh expansions on the sub-hypercubes induced by their registers, so the encoded HUBO of Equation \eqref{eq:ising_hubo} is also already an aggregate of per-register Walsh expansions. Before reading off its couplings, we adopt a normalization convention that keeps unary and pairwise contributions in disjoint Walsh subspaces.

Each pairwise cost table $\beta_{i,j}$ is taken to satisfy:

\begin{equation}\label{eq:centered_norm}
    \sum_{c_j} \beta_{i,j}(c_i, c_j) = 0 \quad \forall c_i,
    \quad
    \sum_{c_i} \beta_{i,j}(c_i, c_j) = 0 \quad \forall c_j
\end{equation}

\noindent any nonzero marginals having been absorbed into $\alpha_i$ and $\alpha_j$. This entails no loss of generality, and under it the Walsh modes of $\widetilde{\beta}_{i,j}$ supported on a single register vanish. (Note that $\sigma_\emptyset = 1$, so the marginals in Equation \eqref{eq:centered_norm} are, up to a $2^{-D}$ prefactor, precisely $\hat\beta_{i,j}(\emptyset, T_j)$ and $\hat\beta_{i,j}(T_i, \emptyset)$.)

Extend each unary cost table by zero on unused bitstrings:

\begin{equation}\label{eq:alpha_tilde}
    \widetilde{\alpha}_i(\vec{z}_i) = \begin{cases}
        \alpha_i(c_i) & \text{if } \vec{z}_i = \vec{s}^{\,(c_i)} \\
        0 & \text{otherwise}
    \end{cases}
\end{equation}

\noindent and define $\widetilde{\beta}_{i,j}$ analogously. With these conventions in hand:

\begin{theorem}[Cost-table form of the encoded couplings]\label{thm:walsh_coeffs}

    Under the centered normalization of Equation \eqref{eq:centered_norm}, let $S \subseteq [n]$ be nonempty. Then:
    
    \begin{enumerate}[leftmargin=*]

        \item If $S \subseteq [D_i]$ for a unique register $i$,
        \begin{equation}\label{eq:J_unary}
            c_S = \hat{\alpha}_i(S)
                = \frac{1}{2^{D_i}} \sum_{c_i = 1}^{|d_i|}
                  \alpha_i(c_i)\,\sigma_S^{(c_i)}.
        \end{equation}

        \item If $S = T_i \sqcup T_j$ with $T_i \subseteq [D_i]$ and $T_j \subseteq [D_j]$
        both nonempty,
        \begin{equation}\label{eq:J_binary}
            c_S = \hat{\beta}_{i,j}(T_i, T_j)
                = \frac{1}{2^{D_i + D_j}}
                  \sum_{c_i, c_j} \beta_{i,j}(c_i, c_j)\,
                  \sigma_{T_i}^{(c_i)}\,\sigma_{T_j}^{(c_j)}.
        \end{equation}

        \item If $S$ intersects three or more registers, $c_S = 0$.
    \end{enumerate}

\end{theorem}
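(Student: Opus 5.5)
The plan is to substitute the Walsh expansion of the indicator polynomials, Equation \eqref{eq:indicator_walsh}, into the encoded unary and pairwise contributions, Equations \eqref{eq:unary-encoded}--\eqref{eq:binary-encoded}, and then read off coefficients. By Equation \eqref{eq:walsh_identification}, $c_S = \hat f(S)$, and the Walsh expansion is unique, so it suffices to compute the coefficient of each $\chi_S$ in the sum. For the unary term of register $i$,
\begin{equation*}
\sum_{c_i}\alpha_i(c_i)\,x_{i,c_i}(\vec z_i) = \sum_{T\subseteq[D_i]}\Bigl(\frac{1}{2^{D_i}}\sum_{c_i}\alpha_i(c_i)\,\sigma_T^{(c_i)}\Bigr)\chi_T(\vec z_i).
\end{equation*}
The inner bracket is exactly $\E_{\vec z_i}[\widetilde\alpha_i(\vec z_i)\chi_T(\vec z_i)] = \hat\alpha_i(T)$, because $\widetilde\alpha_i$ is supported on the sign vectors $\vec s^{\,(c_i)}$ and $\chi_T(\vec s^{\,(c_i)}) = \sigma_T^{(c_i)}$. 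For the pairwise term, the product of the two indicator expansions gives a double sum over $T_i\subseteq[D_i]$ and $T_j\subseteq[D_j]$ of $\chi_{T_i\sqcup T_j}$, with coefficient $2^{-(D_i+D_j)}\sum_{c_i,c_j}\beta_{i,j}(c_i,c_j)\sigma_{T_i}^{(c_i)}\sigma_{T_j}^{(c_j)} = \hat\beta_{i,j}(T_i,T_j)$. Again this follows since $\chi_{T_i\sqcup T_j} = \chi_{T_i}\chi_{T_j}$ on disjoint registers.

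Next I collect contributions to a fixed nonempty $S$. Each unary table contributes only to subsets of its own register, and each pairwise table $\beta_{i,j}$ contributes only to subsets of $[D_i]\sqcup[D_j]$. Case 3 is then immediate: an $S$ meeting three or more registers receives no contribution at all, so $c_S=0$. In case 2, $S=T_i\sqcup T_j$ with both parts nonempty. No unary table touches $S$, and the only pairwise table whose register pair contains both $i$ and $j$ is $\beta_{i,j}$, so $c_S=\hat\beta_{i,j}(T_i,T_j)$.

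Case 1 is the step that needs the normalization. Here $S\subseteq[D_i]$ receives $\hat\alpha_i(S)$, and it also receives $\hat\beta_{i,j}(S,\emptyset)$ from every neighbour $j$ of $i$. Using $\sigma_\emptyset^{(c_j)}=1$,
\begin{equation*}
\hat\beta_{i,j}(S,\emptyset) = \frac{1}{2^{D_i+D_j}}\sum_{c_i}\sigma_S^{(c_i)}\sum_{c_j}\beta_{i,j}(c_i,c_j) = 0
\end{equation*}
by the centering condition \eqref{eq:centered_norm}. The same argument handles the symmetric term $\hat\beta_{j,i}(\emptyset,S)$. Hence $c_S=\hat\alpha_i(S)$.

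The main point requiring care is the treatment of unused bitstrings. The identities above hold because both the encoded function and $\widetilde\alpha_i$, $\widetilde\beta_{i,j}$ are defined as sums over valid choices only, so they vanish on unused strings. I will state explicitly that the extension-by-zero convention of Equation \eqref{eq:alpha_tilde} is the one matching the substitution into Equations \eqref{eq:unary-encoded}--\eqref{eq:binary-encoded}. This convention is what makes the marginal cancellation legitimate: the sums over $c_j$ range only over $\{1,\ldots,|d_j|\}$, which is the same range as in \eqref{eq:centered_norm}. The constant term ($S=\emptyset$) is excluded and needs no argument.
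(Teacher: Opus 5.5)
Your proposal is correct and takes essentially the same route as the paper. You substitute the Walsh expansion of the indicators into the encoded unary and pairwise terms, read off coefficients via $\chi_T(\vec{s}^{\,(c_i)}) = \sigma_T^{(c_i)}$, use $\sigma_\emptyset = 1$ with the centering condition to kill the single-register leakage $\hat\beta_{i,j}(S,\emptyset)$ in case~(1), and settle case~(3) from the pairwise structure; you are also slightly more precise than the paper in noting that case~(2) needs no centering and in making the extension-by-zero convention for unused bitstrings explicit.
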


\begin{proof}

    First, observe that $\sigma_T^{(c_i)} = \chi_T(\vec{s}^{\,(c_i)})$, which makes Equation \eqref{eq:indicator_walsh} the Walsh expansion of the indicator restricted to the encoding image. Composing with the cost-table values via Equations~\eqref{eq:unary-encoded}--\eqref{eq:binary-encoded} and reading off the coefficient of $\chi_T$:

    \begin{equation}\label{eq:alpha-walsh-explicit}
        \hat{\widetilde{\alpha}}_i(T)=\frac{1}{2^{D_i}} \sum_{c_i} \alpha_i(c_i)\,\sigma_T^{(c_i)}
    \end{equation}

    \noindent and:

    \begin{equation}\label{eq:beta-walsh-explicit}
        \hat{\widetilde{\beta}}_{i,j}(T_i, T_j) = \frac{1}{2^{D_i + D_j}} \sum_{c_i, c_j} \beta_{i,j}(c_i, c_j)\,\sigma_{T_i}^{(c_i)}\,\sigma_{T_j}^{(c_j)}
    \end{equation}

    \noindent The pairwise expansion supplies Walsh modes whose support $S = T_i \cup T_j$ may \textit{a priori} have $T_i = \emptyset$ or $T_j = \emptyset$, in which case the support lies entirely on a single register. 
    
    Under the centered normalization of Equation \eqref{eq:centered_norm}, however, the inner sums in Equation \eqref{eq:beta-walsh-explicit} reduce to a marginal sum that vanishes whenever $T_i = \emptyset$ or $T_j = \emptyset$, since $\sigma_\emptyset = 1$. Consequently, the only Walsh modes of $\widetilde{\beta}_{i,j}$ that contribute have $T_i, T_j$ both nonempty; this gives case~(2), and case~(1) reduces to the pure unary contribution $\hat{\widetilde{\alpha}}_i$.

    Case~(3) is forced by the pairwise structure of $f_{\mathrm{CFN}}$: no cost table is supported on more than two registers, so no Walsh mode of $f$ straddles three or more.

\end{proof}

The immediate consequence of interest to us is the additive form of the CFN spectral profile.

\begin{corollary}[Additive spectral decomposition]\label{cor:spectral_decomp}

    Let:

    \begin{equation}\label{eq:Pk-unary}
        P_k^{(i)} = \sum_{\substack{T \subseteq [D_i] \\ |T| = k}} \hat{\widetilde{\alpha}}_i(T)^2
    \end{equation}

    \noindent denote the degree-$k$ unary spectral power and:

    \begin{equation}\label{eq:Pk-binary}
        P_k^{(i,j)} = \sum_{\substack{|T_i| + |T_j| = k \\ T_i, T_j \neq \emptyset}} \hat{\widetilde{\beta}}_{i,j}(T_i, T_j)^2
    \end{equation}

    \noindent the bidegree-$k$ pairwise spectral power. Then:

    \begin{equation}\label{eq:additive_Pk}
        P_k = \sum_{i=1}^{N} P_k^{(i)} + \sum_{1\leq i < j \leq N} P_k^{(i,j)}
    \end{equation}

    \noindent The unary contribution $P_k^{(i)}$ vanishes for $k > D_i$, and the binary contribution $P_k^{(i,j)}$ vanishes for $k > D_i + D_j$ or $k \leq 1$.

\end{corollary}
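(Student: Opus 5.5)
The plan is to derive the corollary directly from Theorem~\ref{thm:walsh_coeffs} and the definition $P_k = \sum_{|S|=k} c_S^2$ of Equation~\eqref{eq:spectral_profile}. First, I partition the degree-$k$ subsets $S \subseteq [n]$ with $k \geq 1$ by how many registers they meet. The register blocks $[D_1], \ldots, [D_N]$ are disjoint coordinate sets in $[n]$, so each nonempty $S$ falls into exactly one of three classes.
\begin{enumerate}[leftmargin=*]
\item $S \subseteq [D_i]$ for a single register $i$.
\item $S = T_i \sqcup T_j$ with $i<j$ and both $T_i, T_j$ nonempty; this decomposition is unique, since $T_i = S \cap [D_i]$ and $T_j = S \cap [D_j]$.
\item $S$ meets three or more registers.
\end{enumerate}
With this partition, the sum defining $P_k$ splits into three disjoint sums.

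Second, I substitute the case-wise formulas from Theorem~\ref{thm:walsh_coeffs}. Class 3 contributes nothing because $c_S = 0$ there. In class 1, $c_S = \hat{\widetilde\alpha}_i(S)$, so the sum over $|S|=k$ with $S\subseteq[D_i]$ is exactly $P_k^{(i)}$ of Equation~\eqref{eq:Pk-unary}. In class 2, $c_S = \hat{\widetilde\beta}_{i,j}(T_i,T_j)$, and because the decomposition is a bijection, summing over $|T_i|+|T_j|=k$ with both parts nonempty gives $P_k^{(i,j)}$ of Equation~\eqref{eq:Pk-binary}. Adding the three classes yields Equation~\eqref{eq:additive_Pk}.

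The step I expect to need the most care is class 1. A set $S \subseteq [D_i]$ could in principle also receive single-register contributions from pairwise tables $\widetilde\beta_{i,j}$. These must be excluded; otherwise the squares would pick up cross terms and the sum would not be additive. Theorem~\ref{thm:walsh_coeffs} already handles this: under the centering of Equation~\eqref{eq:centered_norm}, the single-register pairwise modes vanish, so $c_S$ is purely $\hat{\widetilde\alpha}_i(S)$. I would state this explicitly, since it is what makes the decomposition additive in the squared coefficients rather than merely in the coefficients themselves.

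Finally, the vanishing claims follow from counting. The unary power $P_k^{(i)}$ is zero for $k > D_i$ because no $T\subseteq[D_i]$ has $|T| > D_i$. The pairwise power $P_k^{(i,j)}$ is zero for $k > D_i+D_j$ for the same reason. It is also zero for $k \leq 1$, because two nonempty disjoint parts force $|T_i|+|T_j| \geq 2$.
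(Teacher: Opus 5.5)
Your proof is correct and matches the paper, which gives no separate proof and presents the corollary as an immediate consequence of Theorem~\ref{thm:walsh_coeffs}. Your argument spells out that implicit step: partition the nonempty $S$ by the registers they meet, substitute the case-wise coupling formulas (the centering removes single-register pairwise modes, so squares add without cross terms), and read off the vanishing ranges by counting. Restricting to $k \geq 1$ is also the right call, since the identity would generally fail at $k=0$, where the constant coupling sums all the $\hat{\widetilde\alpha}_i(\emptyset)$ before squaring.
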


Equation \eqref{eq:additive_Pk} is the reason CFN-level structural arguments suffice for determining when the noise-floor condition should be satisfied: each cost table contributes orthogonally to the global spectrum, and the noise floor condition holds for $f$ if and only if it holds in aggregate across the cost-table spectra. This reduces the global question ``which CFNs are TBE-amenable?'' to the local question ``which cost tables, under which bitstring assignments, have front-loaded Walsh spectra?'' This question we begin to address by way of discrete calculus of the next subsection.

\subsection{Discrete Calculus on the Hypercube}
\label{subsec:discrete-calc}

The notion of \textit{smoothness} on the hypercube $\{-1,+1\}^D$ is captured by a discrete analogue of partial differentiation (see Ref.~\cite[Ch.~2]{ODonnell2014}), and bounds on its higher derivatives translate directly into bounds on the high-degree Walsh spectrum. Because we use this machinery heavily to convert qualitative claims about ``smooth potentials'' into quantitative spectral statements, we develop it carefully here.

For a function $f\colon \{-1,+1\}^D \to \mathbb{R}$ and a coordinate $q \in [D]$, write $\vec{z}^{q\to+}$ and $\vec{z}^{q\to-}$ for the configurations obtained from $\vec{z}$ by setting coordinate $q$ to $+1$ or $-1$ respectively (leaving all other coordinates unchanged). The \textit{discrete partial derivative of $f$ along coordinate $q$} is then:

\begin{equation}\label{eq:discrete-partial}
    D_{\{q\}} f(\vec{z}) \coloneqq \tfrac{1}{2}\bigl(f(\vec{z}^{q\to+}) - f(\vec{z}^{q\to-})\bigr)
\end{equation}

\noindent That is, $D_{\{q\}} f(\vec{z})$ is the average rate of change of $f$ as coordinate $q$ flips from $-1$ to $+1$, divided by the coordinate spacing $(+1) - (-1) = 2$.

Then, for any $T \subseteq [D]$, the Walsh function $\chi_T(\vec{z}) = \prod_{i \in T} z_i$ depends on coordinate $q$ if and only if $q \in T$. If $q \notin T$, then $\chi_T(\vec{z}^{q\to+}) = \chi_T(\vec{z}^{q\to-})$, so $D_{\{q\}} \chi_T = 0$. If $q \in T$, then $\chi_T(\vec{z}^{q\to\pm}) = (\pm 1)\,\chi_{T \setminus \{q\}}(\vec{z})$, so the partial derivative is then:

\begin{equation}\label{eq:walsh-derivative}
    D_{\{q\}} \chi_T = \begin{cases} \chi_{T \setminus \{q\}} & \text{if } q \in T \\ 0 & \text{if } q \notin T \end{cases}
\end{equation}

\noindent Thus single-coordinate partial differentiation acts on the Walsh basis by removing coordinate $q$ from the index set when present and annihilating the mode otherwise.

For $S \subseteq [D]$, the \textit{$|S|$th-order mixed partial derivative} of $f$ along coordinates $S$ is defined by composition:

\begin{equation}\label{eq:DS-as-product}
    D_S f \coloneqq \prod_{q \in S} D_{\{q\}} f
\end{equation}

\noindent The single-coordinate operators $D_{\{q\}}$ commute (each independently fixes its own coordinate), so the order of composition does not matter and Equation \eqref{eq:DS-as-product} is well-defined. Iterating Equation \eqref{eq:walsh-derivative}, we obtain the action on Walsh modes:

\begin{equation}\label{eq:DS-walsh-action}
    D_S \chi_T = \begin{cases} \chi_{T \setminus S} & \text{if } S \subseteq T \\ 0 & \text{otherwise} \end{cases}
\end{equation}

\noindent which combined with the Walsh expansion $f = \sum_T \hat{f}(T) \chi_T$ gives the following:

\begin{equation}\label{eq:DSg}
    D_S f(\vec{z}) = \sum_{T \supseteq S} \hat{f}(T)\,\chi_{T \setminus S}(\vec{z})
\end{equation}

\noindent Concretely, $D_S f$ is the function that retains only those Walsh modes whose support contains every coordinate in $S$, and shifts each surviving mode down by $|S|$ in degree.

\begin{definition}[Lipschitz constants and Parseval's identity]\label{def:lipschitz}

    The $k$th-order \textit{discrete Lipschitz constant} of $f$ is:

    \begin{equation}\label{eq:Lk-def}
        L_k(f) \coloneqq \max_{|S| = k} \|D_S f\|_\infty
    \end{equation}

    \noindent which measures the largest possible magnitude that any $k$th-order discrete derivative of $f$ can attain anywhere on the hypercube.

\end{definition}

By Parseval's identity applied to Equation \eqref{eq:DSg}, the squared $L^2$ norm of the $k$th derivative is the sum of squared Walsh coefficients of $f$ at all index sets containing $S$:

\begin{equation}\label{eq:DSg_l2}
    \|D_S f\|_2^2 = \E_{\vec{z}}\bigl[\left(D_S f(\vec{z})\right)^2\bigr] = \sum_{T \supseteq S} \hat{f}(T)^2
\end{equation}

\noindent This is the bridge between smoothness (bounds on $L_k$) and spectral concentration (bounds on $P_{>k}$). We develop this further in the following subsection, where we see that smoothness implies favourable spectral decay.

\subsection{Smoothness Implies Spectral Decay}
\label{subsec:smoothness_decay}

We can now state the precise sense in which bounded high-order derivatives imply rapidly decaying Walsh tails.

\begin{lemma}[Tail bound from $k$th-order smoothness]\label{lem:tail_bound}

For any $f\colon \{-1,+1\}^D \to \mathbb{R}$ and any $k \geq 1$:

\begin{equation}\label{eq:tail_bound}
    \sum_{j \geq k} \binom{j}{k} P_j(f) = \sum_{|S| = k} \|D_S f\|_2^2 \leq \binom{D}{k}\,L_k(f)^2
\end{equation}

\noindent and consequently:

\begin{equation}\label{eq:Pk_and_tail}
    P_k(f) \leq \binom{D}{k}\,L_k(f)^2 \quad \text{and} \quad \sum_{j \geq k} P_j(f) \leq \binom{D}{k}\,L_k(f)^2
\end{equation}

\end{lemma}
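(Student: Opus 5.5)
The plan is to prove the identity in Equation \eqref{eq:tail_bound} by double counting, then bound each term by its sup norm, and finally read off the two consequences in Equation \eqref{eq:Pk_and_tail}.

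\textbf{The identity.} I start from the Parseval form of the derivative, Equation \eqref{eq:DSg_l2}, which gives $\|D_S f\|_2^2 = \sum_{T \supseteq S} \hat f(T)^2$ for each $S$ with $|S| = k$. Summing over all such $S$ and exchanging the order of summation gives
\begin{equation*}
\sum_{|S|=k} \|D_S f\|_2^2 = \sum_{T} \hat f(T)^2 \,\bigl|\{S \subseteq T : |S| = k\}\bigr| = \sum_{T} \binom{|T|}{k} \hat f(T)^2 .
\end{equation*}
Each $T$ with $|T| = j \geq k$ is counted $\binom{j}{k}$ times, and sets with $|T| < k$ contribute nothing because the binomial coefficient vanishes. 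Grouping the $T$ by degree $j$ and using $P_j = \sum_{|T|=j}\hat f(T)^2$ from Equation \eqref{eq:variance-decomp} gives the left equality in Equation \eqref{eq:tail_bound}.

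\textbf{The inequality.} Next I compare the $L^2$ norm with the $L^\infty$ norm under the uniform measure: $\|D_S f\|_2^2 = \E[(D_S f)^2] \leq \|D_S f\|_\infty^2 \leq L_k(f)^2$, where the last step is Definition~\ref{def:lipschitz}. There are $\binom{D}{k}$ subsets $S \subseteq [D]$ of size $k$, so summing these bounds gives $\binom{D}{k} L_k(f)^2$.

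\textbf{The consequences.} Every term $\binom{j}{k}P_j$ with $j \geq k$ is nonnegative, so dropping all terms except $j = k$ yields $P_k \leq \binom{D}{k}L_k^2$. For the tail bound I use $\binom{j}{k} \geq 1$ whenever $j \geq k$. This gives $\sum_{j\ge k} P_j \leq \sum_{j\ge k}\binom{j}{k}P_j$, and the right-hand side is already bounded by $\binom{D}{k}L_k^2$.

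The only delicate point is the double-counting step. It needs the derivative operator $D_S$ to be exactly the one whose Walsh action is given in Equation \eqref{eq:DSg}, with no extra normalization factors. The factor $\tfrac12$ in Equation \eqref{eq:discrete-partial} ensures this, via Equation \eqref{eq:walsh-derivative}. With that checked, the rest is routine.
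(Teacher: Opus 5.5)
Your proposal is correct and follows the paper's proof essentially step for step. You sum the Parseval identity $\|D_S f\|_2^2=\sum_{T\supseteq S}\hat f(T)^2$ over $|S|=k$ and double count to get the $\binom{j}{k}$ weights, bound $\|D_S f\|_2\le\|D_S f\|_\infty\le L_k(f)$ over the $\binom{D}{k}$ sets, and then use $\binom{j}{k}\ge 1$ for $j\ge k$ to obtain both consequences, exactly as the paper does, with your check on the $\tfrac12$ normalization making explicit what the paper's Parseval step implicitly relies on.
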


\begin{proof}

Sum Equation \eqref{eq:DSg_l2} over $S$ with $|S| = k$. Each Walsh coefficient $\hat{f}(T)^2$ with $|T| = j \geq k$ appears once for each $S \subseteq T$ with $|S| = k$, i.e.\ $\binom{j}{k}$ times:

\begin{equation}\label{eq:sum-over-S}
    \sum_{|S| = k} \|D_S f\|_2^2 = \sum_{|S| = k} \sum_{T \supseteq S} \hat{f}(T)^2 = \sum_{j \geq k} \binom{j}{k} P_j(f)
\end{equation}

\noindent Bounding $\|D_S f\|_2^2 \leq \|D_S f\|_\infty^2 \leq L_k(f)^2$ for each $S$ and using $|\{S : |S| = k\}| = \binom{D}{k}$ gives the right-hand inequality of Equation \eqref{eq:tail_bound}. The two consequences in Equation \eqref{eq:Pk_and_tail} follow because $\binom{j}{k} \geq 1$ for $j \geq k$, so each $P_j(f)$ on the left side of Equation \eqref{eq:tail_bound} is bounded individually and the tail sum is bounded by dropping the binomial weights.

\end{proof}

The interpretation of Lemma \ref{lem:tail_bound} is that controlling $L_k(f)$ controls the Walsh mass at all degrees $j \geq k$. If $L_k(f) = O(c^k)$ for some $c < 1$, then since $\binom{D}{k} \leq 2^D$, the tail $\sum_{j \geq k} P_j(f)$ decays at least geometrically in $k$, and the noise floor condition holds for any $\kmax$ above the constant where $L_k$ becomes small relative to lower-order derivatives. If instead $L_k(f) = O(c^k)$ for $c \geq 1$, the noise floor condition cannot be verified through smoothness alone.

\subsection{Practical Diagnostic Workflow for TBE Efficacy}
\label{subsec:cfn_summary}

The combination of Corollary \ref{cor:spectral_decomp} and Lemma \ref{lem:tail_bound} converts the abstract noise floor condition into a checkable structural criterion. A pairwise CFN $f_{\mathrm{CFN}}$ is a candidate for small-$\kmax$ TBE when, under the chosen bitstring assignment:

\begin{enumerate}[leftmargin=*]

    \item Each unary cost table $\widetilde{\alpha}_i$ has small high-order Lipschitz constants $L_k(\widetilde{\alpha}_i)$, decaying geometrically (or vanishing identically beyond a fixed degree);

    \item Each pairwise cost table $\widetilde{\beta}_{i,j}$ likewise has small $L_k(\widetilde{\beta}_{i,j})$, with the same decay structure; and

    \item The interaction graph is sparse, or, equivalently, the per-edge contributions $P_k^{(i,j)}$ aggregate to a controlled total relative to the unary contributions $P_k^{(i)}$.

\end{enumerate}

Each $P_k^{(i)}$ and $P_k^{(i,j)}$ can be precomputed via a fast Walsh--Hadamard transform on the corresponding sub-hypercube, in time $O(|d_i|\,D_i)$ for unary tables and $O(|d_i|\,|d_j|\,(D_i + D_j))$ for binary tables; aggregating these per-table profiles via Equation \eqref{eq:additive_Pk} yields the global spectrum and feeds directly into the noise-floor diagnostic and the $L^\infty$ bound. Per-table profiles also identify spectral outliers: when the global spectrum is dominated by a small set of tables with high-degree mass, those tables are candidates either for replacement (if a smoother surrogate is available), for re-assignment (if their smoothness is being undermined by a poor bitstring choice), or for exclusion from TBE. When all per-table profiles exhibit geometric decay, the additive decomposition guarantees that the assembled HUBO does as well, and the choice of $\kmax$ can be made by inspection of the aggregate profile.

\section{TBE Compilation Algorithm}
\label{sec:algorithm}

Algorithm~\ref{alg:tbe} describes the end-to-end compilation pipeline. Because we work in the Ising basis from the start, no basis conversion is required between obtaining the encoding and computing the spectral profile, and Walsh diagnostics are immediately available from the couplings.

\begin{algorithm}[t]
\caption{Truncated-Binary Encoding Compilation}
\label{alg:tbe}
    \begin{algorithmic}[1]
        \REQUIRE CFN $f_{\mathrm{CFN}}$; cutoff $\kmax$
        \ENSURE Truncated Ising HUBO $f_{\leq \kmax}$; certified error $\varepsilon$
        \STATE Apply exact binary encoding in the Ising basis (Equations~\eqref{eq:indicator}--\eqref{eq:ising_hubo}) to obtain Ising couplings $\{c_S\}$
        \STATE Compute spectral profile $P_k = \sum_{|S|=k} c_S^2$ for $k = 0, \ldots, \kfull$
        \STATE Verify the noise floor condition $P_{\leq \kmax} \gg P_{>\kmax}$ (Definition~\ref{def:noise_floor})
        \STATE $\varepsilon \leftarrow \sum_{|S| > \kmax} |c_S|$ \hfill\textit{(Theorem \ref{thm:linfty})}
        \STATE Set $f_{\leq \kmax} \leftarrow \sum_{|S| \leq \kmax} c_S \prod_{i \in S} z_i$
        \STATE Solve $\arg\min f_{\leq \kmax}$ on quantum hardware (with quadratization to QUBO if required by target)
        \STATE Decode the spin configuration to a CFN configuration; evaluate $f_{\mathrm{CFN}}$ as ground truth
        \STATE Optionally refine via single bit-flip descent on full $f$
        \RETURN Best CFN configuration found, certified error $\varepsilon$
    \end{algorithmic}
\end{algorithm}

\textit{Quadratization for $\kmax > 2$.} If the downstream solver requires QUBO form, the truncated HUBO must still be reduced to $2$-local form via standard ancilla constructions \cite{Boros2002, Dattani2019, Anthony2017}. The cost of this reduction is controlled by $\kmax$ rather than by $\kfull$; choosing $\kmax = 2$ avoids quadratization entirely, while $\kmax \in \{3, 4\}$ produces small, controlled gadgetization overhead. Quadratization for an Ising target proceeds analogously to the standard $\{0,1\}$-form construction, with auxiliary spins playing the role of auxiliary bits.

\textit{Decoding and verification.} Because TBE is approximate, the spin configuration $\ztrunc = \arg\min f_{\leq \kmax}$ recovered from hardware must be re-evaluated against the full CFN cost $f_{\mathrm{CFN}}$ to obtain a reliable score. The unused-bitstring problem of exact binary encoding---arising when $|d_i| < 2^{D_i}$---persists under TBE and is handled identically, by penalty terms or by mapping unused bitstrings to a fallback choice \cite{Berwald2024}.

\textit{Local refinement.} Theorem \ref{thm:basin_preservation} ensures that under the basin-barrier condition, single bit-flip descent on the full $f$ initialized at $\ztrunc$ either fixes immediately (if $\ztrunc$ is a local minimum of $f$) or descends to a nearby local minimum. This makes TBE compose naturally with any downstream local-search refinement, with $\ztrunc$ supplying a warm start whose distance to the nearest local minimum of $f$ is bounded by the basin-barrier slack $\delta_f(\zstar) - 2\varepsilon(\kmax)$.

\section{Discussion and Conclusions}
\label{sec:discussion}

Truncated-binary encoding occupies a previously underexplored point in the space of CFN-to-HUBO compilations. The standard alternatives---one-hot, domain-wall, exact binary---all aim for exact representations and trade only along the qubit-count versus QUBO-graph-density axis \cite{Berwald2024}. TBE operates along a third axis, \textit{approximation quality} (as does the approximate-binary encoding of Ref. \cite{Zaborniak2025}), with the exact encoding recovered as $\kmax \to \kfull$ and the QUBO regime recovered at $\kmax = 2$, and intermediate $\kmax$ recovering lower-degree HUBOs in between.

The Ising-basis formulation we have adopted is strictly necessary. As the spectral leakage formula of Theorem \ref{prop:leakage} makes precise, $\{0,1\}$-form HUBO truncation distributes mass across all lower Walsh degrees; only in the Ising basis does HUBO-degree truncation coincide with Walsh-degree truncation, and only there does the spectral profile $\{P_k\}$ provide a direct readout of the coupling magnitudes. Further, the combination of the noise floor condition (Section \ref{sec:spectral}) with the additive cost-table decomposition (Section \ref{sec:cfn-structure}) traces TBE applicability to specific structural features of the underlying CFN.

Several extensions of the present analysis are immediate. First, the heuristic Gaussian-field argument of Section \ref{subsec:gaussian} can be made more rigorous, replacing pointwise CLT estimates by uniform tail bounds via hypercontractivity \cite[Ch.~9]{ODonnell2014} and treating the residual $\eta$ as a correlated Gaussian field on the hypercube \cite{Talagrand2011}. Second, the choice of bitstring assignment $\vec{r}^{(c_i)}$ is unspecified by exact binary encoding; different assignments for different cost tables may be necessary (\textit{e.g.}, Gray code for smooth physical potentials). Per-table assignment optimization, jointly with $\kmax$, is itself a combinatorial problem and is open. Finally, benchmarking on analog and gate-based quantum hardware implementations across a representative suite of CFNs is required to establish the practical regime of TBE's effectiveness.

\section*{Acknowledgments}

\small TZ was funded by a National Sciences and Engineering Research Council of Canada (NSERC) Collaborative Research and Training Experience (CREATE) grant on Quantum Computing, NSERC Alliance Consortium Grant entitled Quantum Software Consortium -- Exploring Distributed Quantum Solutions for Canada (QSC), an NSERC Alliance grant on Quantum Computing for Optimal Mobility and the Simons Foundation. The author thanks Vikram Khipple Mulligan, Ulrike Stege, and Hausi M\"uller for useful discussions that informed this work.

\bibliographystyle{IEEEtran}
\bibliography{main}

\end{document}